\newcommand{\mathsym}[1]{{}}
\newcommand{\unicode}[1]{{}}
\theoremstyle{plain}
\newtheorem{theorem}{Theorem}
\newtheorem{lemma}[theorem]{Lemma}
\newtheorem{proposition}[theorem]{Proposition}
\theoremstyle{definition}
\theoremstyle{remark}
\newtheorem{remark}[theorem]{Remark}
\renewcommand{\leq}{\leqslant}
\renewcommand{\geq}{\geqslant}
\newcommand{\Tr}{\rm Tr}
\numberwithin{equation}{section}
\begin{document}
	\title{
		Finite size corrections at the hard edge for the Laguerre $\beta$ ensemble
	}
	\author{Peter J. Forrester and Allan K. Trinh}
	\address{Department of Mathematics and Statistics, 
ARC Centre of Excellence for Mathematical \& Statistical Frontiers,
University of Melbourne, Victoria 3010, Australia}
	\email{pjforr@unimelb.edu.au}
	\email{a.trinh4@student.unimelb.edu.au}
	\dedicatory{Dedicated to N.E.~Frankel on the occasion of his 80${}^{th}$ birthday.}
	\maketitle
	
\begin{abstract}
A fundamental question in random matrix theory is to quantify the optimal rate of convergence to universal laws.
We take up this problem for the Laguerre $\beta$ ensemble, characterised by the Dyson parameter $\beta$, and
the Laguerre weight $x^a e^{-\beta x/2}$, $x > 0$ in the hard edge limit. The latter relates to the eigenvalues
in the vicinity of the origin in the scaled variable $x \mapsto x/4N$. 
Previous work has established the corresponding functional form of various
statistical quantities --- for example the distribution of the smallest eigenvalue, provided that $a \in \mathbb Z_{\ge 0}$.
We show, using the theory of multidimensional hypergeometric functions based on Jack polynomials,
 that with the modified hard edge scaling
$x \mapsto x/4(N+a/\beta)$,  the rate of convergence to the limiting distribution is $O(1/N^2)$, which is optimal.
In the case $\beta = 2$, general $a> -1$ the explicit functional form of the distribution of the smallest eigenvalue at this order can be computed, as it can
for $a=1$ and general $\beta > 0$. An iterative scheme is presented to numerically approximate the  functional form for general
$a \in \mathbb Z_{\ge 2}$.
\end{abstract}

\section{Introduction} \label{S1}

Random matrix theory abounds in limit laws. As an example, consider an ensemble of $N \times N$ complex Hermitian matrices,
diagonal entries independently chosen from a distribution with mean $0$ and variance $1/2$, and diagonal entries
have mean zero and finite standard deviation. Assume too that all entries have finite $(2 + \epsilon)$-th moment for some
fixed $\epsilon > 0$. Let $E_N(0;(a,b))$ denote the probability that the interval $(a,b)$ is free of eigenvalues.
Then, independent of further details of the distributions,
one has the limit theorem \cite{Me91,EY17,Ag18}
\begin{equation}\label{1.1}
\lim_{N \to \infty} E_N\Big ( 0, \Big ( {\pi a \over \sqrt{2N}}, {\pi b \over \sqrt{2N}} \Big ) \Big ) = \det \Big ( \mathbb I_N - \mathbb K_{(0,b-a)}^{\rm sine} \Big ),
\end{equation}
where $ \mathbb K_{(0,b-a)}^{\rm sine}$ is the integral operator on the interval $(0,b-a)$ with kernel
$$
K(x,y) = { \sin \pi (x - y) \over \pi (x - y)}.
$$
With such a limit law established, one can ask for the rate, as a function of $N$, that the limiting distribution is approached.
Such a question is core not only from a probabilistic viewpoint on random matrix theory, but also in the context of applications of
random matrices, particularly to the Riemann zeros.

In this topic from number theory, the zeros of the Riemann zeta function $\zeta(s) = \sum_{n=1}^\infty n^{-s}$ in the upper half
complex $s$-plane are written as ${1 \over 2} + i E_p$  $(p=1,2,\dots)$ with $0 < E_1 < E_2 < \cdots $ in accordance with the
Riemann hypothesis. For large $E$ a model put forward by Keating and Snaith \cite{KS00a} predicts that the zeros in intervals of
size ${1 \over 2 \pi} \log E$ behave statistically like the eigenvalues of Haar distributed random matrices $U \in U(N)$ with
$N \approx {1 \over 2 \pi} \log E$. This is a refinement of the so-called Montgomery-Odlyzko law (see e.g.~\cite{KS99})
which asserts that for $E \to \infty$ the statistical properties of the Riemann zeros, scaled to have mean spacing unity,
coincide with the statistical properties of the bulk scaled eigenvalue of 
$N \times N$ complex Hermitian matrices with $N \to \infty$. It immediately draws attention to finite $N$ corrections to limit laws
such as (\ref{1.1}). Indeed, using the $U(N)$ model, the leading finite $N$ correction to various statistical quantities can be computed
exactly and compared against Odlyzko's high precision evaluation of large runs (over $10^9$) at very large heights ($E \approx 1.3 \times 10^{22}$)
as announced in \cite{Od01} and subsequently made available to interested researchers; see \cite{BBLM06, FM15, BFM17} for such
studies.

The analysis of finite $N$ corrections to bulk scaled statistical quantities for Hermitian matrices, in contrast to matrices from $U(N)$,
is complicated by these corrections not being translationally invariant. Thus for example one would expect that in (\ref{1.1}) a
dependence on  both end points $a$ and $b$ at this order, rather than their difference as is the case of the limit law. This means that to
obtain meaningful predictions the statistical quantity needs to be averaged over some mesoscopic interval. If one considers instead
finite $N$ corrections at the spectrum edge, this complication is not present. Appreciation of this point, together with interest and
earlier work within multivariate statistics \cite{EK06, JM12, Ma12}, motivated our recent studies \cite{FT18,FT19} at the soft
edge of various random matrix ensembles. One recalls that the defining feature of a soft edge of a random matrix spectrum is
that at the spectrum edge the spectral density is non-zero on either side. An unexpected effect was observed: in all cases analysed a scaling
and centring of eigenvalues could be found such that the leading correction to the statistical quantities being analysed occurred at order
$N^{-2/3}$. One description of this is as a weak universality, since the rate of convergence to the universal laws is independent of the details
of the random matrix model although the functional form of the leading correction term is ensemble dependent.

Random matrix ensembles can also exhibit a hard edge. Thus for example a positive definite matrix has its spectral density strictly zero for negative values,
and so if the eigenvalue support borders the origin, this spectral boundary is a hard edge. Our interest in this paper is on the form of the 
optimal scaling of eigenvalues at the hard edge for various random matrix ensembles, chosen to make the leading correction term as small as possible.
In fact there is some previous literature on this issue. Consider an $n \times N$ $(n \ge N)$ complex Gaussian matrix $X$, and form the product $X^\dagger X$.
Set $a = n - N$.
Let  $E_{N,\beta}(0;J;x^a e^{-\beta x/2})$ denote the probability that the domain $J\subset \mathbb{R}_{>0}$ contains no eigenvalues
(the reason for the argument $x^a e^{-\beta x/2}$ will become apparent later). Here $\beta$ is the so-called Dyson index, which
for complex entries equals 2. 
Edelman, 
Guionnet and P\'ech\'e \cite{EAP16} conjectured the large $N$ expansion
\begin{equation}\label{eq:12}
E_{N,2}(0;(0,s/4N);x^a e^{-x})) =  E_2^\text{hard}(s;a) + {a\over 2N} s {d\over ds}E_2^\text{hard}(s;a)
		+ O\left( {1\over N^2} \right), 
	\end{equation}
	where
	$$
E_2^\text{hard}(s;a) :=   \lim_{N \to \infty} E_{N,2}(0;(0,s/4N);x^a e^{-x})).
$$	
Proofs were subsequently provided by Perret and Schehr \cite{PS16}, and by Bornemann \cite{Bo16}.
Hachen, Hardy and Najim \cite{HHN16} gave a generalisation of (\ref{eq:12}) for the product $X^\dagger \Sigma X$,
where $ \Sigma$ is a fixed positive definite matrix satisfying certain constraints on its eigenvalues --- the form
of the $1/N$ correction persists but with $a$ renormalised.

Furthermore Bornemann \cite{Bo16} observed that the hard edge scaling can be optimally tuned to
	\begin{equation} \label{eq:13}
		\frac{s}{4N} \left(
		1 - {a\over 2N}
		\right).
	\end{equation}
	A simple Taylor series expansion says that the term proportional to $1/N$ in (\ref{eq:12}) then cancels. Therefore the leading non-trivial large $N$ correction is $O(N^{-2})$. 
	
	One immediate question is to give the functional form of this correction. We give two such expressions in Section \ref{S2}: one follows from an operator theoretic
	approach, and the other makes use of differential equations and Painlev\'e transcendents. In light of the weak universality observed at the soft edge, in the sense
	that  optimal scaling gives the same order for the leading correction term in all cases where analysis has been possible (see \cite{FT19} and references therein),
	it is natural to consider the large $N$ form of $E_{N,\beta}(0;J;x^a e^{-\beta x/2})$, and related statistical quantities,  for other values of the Dyson index $\beta$.
	Our working in Section \ref{S3}, using the theory of multivariable hypergeometric functions based on Jack polynomials to
	analyse the hard edge asymptotics of $E_{N,\beta}(0;J;x^a e^{-\beta x/2})$ for general $\beta > 0$ and $a \in \mathbb Z^+$,  shows that the extension of (\ref{eq:13}) to
\begin{equation} \label{eq:13a}
		\frac{s}{4(N+  a /\beta)}
	\end{equation}	
gives an optimal convergence to the limiting distribution at rate 	$O(N^{-2})$. Hence, as already obtained at the soft edge, there is strong evidence for
a weak universality of statistical quantities at the hard edge of random matrix ensembles: by tuning the scaling there is an optimal rate of convergence
to the limiting distribution, which is the same for all ensembles considered. This is further confirmed in Section \ref{S5}, where the methods of Section
\ref{S3} are applied to analyse the hard edge asymptotics of the spectral
	density for even $\beta$ and general $a > - 1$.
	
	In the case $\beta = 2$, and for general $a> - 1$, the results of Section \ref{S2} give the explicit functional form of the optimal correction term to the limiting hard
	edge scaled distribution of the smallest eigenvalue. A (scalar) hypergeometric representation gives the functional of the correction for general $\beta > 0$, in
	the case $a=1$. This is given in Section \ref{S3.3}. 
In Section \ref{S4}, an iterative scheme is presented to numerically approximate the functional  form of the correction for general
$a \in \mathbb Z_{\ge 2}$.

\section{Functional form of optimal correction term for complex Wishart matrices}\label{S2}
\subsection{Operator theoretic approach}
The matrix product $X^\dagger X$, where $X$ is an $(N+a)\times N$ ($a \in \mathbb Z_{\ge 0}$) matrix with complex standard Gaussian entries
is referred to as a complex Wishart matrix. The eigenvalue PDF has the functional form (see e.g.~\cite[Ch.~3]{Fo10})
	\begin{equation} \label{eq:1}
		{1\over Z_{N,2}} \prod_{l=1}^N w_2(x_l) \prod_{1\leq j<k \leq N} |x_k-x_j|^2,
	\end{equation}
	with $w_2(x)=x^a e^{-x}\chi_{x>0}$ known as the Laguerre weight. The function $\chi_A=1$ if $A$ is true and $0$ otherwise,
	while $Z_{N,2}$ denotes the normalisation. It is the exponent 2 on the product of differences which determines
	that the Dyson index here is $\beta = 2$. 
	
	For a general weight $w(x)$, a PDF of the form (\ref{eq:1}) has the special feature of its general $k$-point correlation function
	being given by a $k \times k$ determinant, fully determined by a single function of two variables $K_N(x,y)$. This kernel
	involves the orthogonal polynomials relating to the weight (see e.g.~\cite[Ch.~5]{Fo10}). In the present case
	\begin{equation} \label{eq:2}
		\rho_k(x_1,\ldots,x_k)=\det[K_N(x_i,x_j)]_{i,j=1,\ldots,k}
	\end{equation}
	where ,with $L_n^{(a)}(x)$ denoting the Laguerre polynomials,
	\begin{align} \label{eq:3}
		K_N(x,y)= (w_2(x)w_2(y))^{1/2}\sum_{n=0}^{N-1} {L_n^{(a)}(x)L_n^{(a)}(y) \over h_n}, \qquad h_n =n! \Gamma(a+n+1).
	\end{align}
	 Moreover the normalisation in (\ref{eq:1}) is given by $Z_{N,2}=\prod_{n=0}^{N-1} h_n$.
	 With use of the Christoffel-Darboux formula (see e.g.~\cite[Prop.~5.1.3]{Fo10}), and the recurrence $L_N^{(a-1)}(x)+L_{N-1}^{(a)}(x)=L_N^{(a)}(x)$, the correlation kernel (\ref{eq:3}) can be written
	\begin{equation} \label{eq:a2}
	K_N(x,y) = N! { e^{-(x+y)/2}(xy)^{a/2}\over \Gamma(a+N)} { L_N^{(a)}(x)L_N^{(a-1)}(y)-L_N^{(a-1)}(x)L_N^{(a)}(y) \over x-y }.
	\end{equation}
	
	A corollary of the determinantal structure (\ref{eq:2}) is that the gap probability $E_{N,2}(0;J;x^a e^{-x};\xi)$ (the extra argument $\xi$ on top of the
	arguments used in (\ref{eq:12}) can be thought of as a thinning parameter: each eigenvalue is deleted independently  with probability $(1 - \xi)$
	and therefore $0 < \xi \le 1$;
	see e.g.~the recent works \cite{Fo14a,FM15,BFM17,CC17,BD17,BB17} as well as the pioneering paper \cite{BP04}) has
	the operator theoretic form
	\begin{equation} \label{eq:a1}
		E_{N,2}(0;J;x^a e^{-x};\xi)) = \det( \mathbb{I} - \xi \mathbb{K}_{N,J} ),
	\end{equation}
	where $\mathbb{I}$ is the identity operator and $\mathbb{K}_{N,J}$ is the integral operator with the kernel (\ref{eq:a2}) supported on $J$. As a consequence,
	 the large $N$ expansion of (\ref{eq:a1}) for $J = (0, s/4N)$  --- referred to as a hard edge scaling --- can be deduced from knowledge of the
	 corresponding large $N$ asymptotics of (\ref{eq:a2}).
	 
	\begin{lemma}
		For general $a$ fixed, the large $N$ expansion of the scaled Laguerre polynomials is given by
		\begin{multline}  \label{eq:a3}
		(2N)^{-a}  x^{a/2} L^{(a)}_N(x/4N) = J_a(\sqrt{x}) + {1 \over N} 
		\left(
		{a \over 2} x^{1/2} J_{a-1}(\sqrt{x}) - {1\over 8} x J_{a-2}(\sqrt{x})
		\right)
		\\
		+ {1 \over N^2}\left(
		{a(a-1)\over 8} x J_{a-2}(\sqrt{x}) + \left({1\over 24}-{a\over 16} \right)x^{3/2}J_{a-3}(\sqrt{x})
		+{1\over 128}x^2 J_{a-4}(\sqrt{x})
		\right) +  O\left( {1\over N^3} \right),
		\end{multline}
		where the $J_a(x)$ are Bessel functions of the first kind. This holds uniformly in $x$ in a compact set on the positive half line.
	\end{lemma}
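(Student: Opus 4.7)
The plan is to carry out a term-by-term asymptotic expansion of the explicit polynomial representation of the Laguerre polynomial, and then to resum the resulting coefficients into Bessel functions via a single falling-factorial identity.

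First I would start from
$$L_N^{(a)}(y)=\sum_{k=0}^{N}\frac{\Gamma(N+a+1)}{\Gamma(N-k+1)\Gamma(a+k+1)\,k!}(-y)^{k},$$
substitute $y=x/(4N)$, and multiply by $(2N)^{-a}x^{a/2}$, obtaining
$$(2N)^{-a}x^{a/2}L_N^{(a)}(x/(4N))=\sum_{k=0}^{N}\frac{(-1)^{k}x^{a/2+k}}{k!\,\Gamma(a+k+1)\,2^{a+2k}}\,R_{N,a,k},\qquad R_{N,a,k}:=\frac{\Gamma(N+a+1)}{N^{a+k}\,\Gamma(N-k+1)}.$$
The $N$-dependence is now isolated in $R_{N,a,k}$, and Stirling's formula (or equivalently the Tricomi--Erdelyi ratio expansion) gives a full asymptotic expansion
$$R_{N,a,k}=1+\frac{(a+k)(a-k+1)}{2N}+\frac{c_{2}(a,k)}{N^{2}}+O(N^{-3}),$$
whose coefficients are explicit polynomials in $a$ and $k$ that can be computed to any order.

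The observation that makes the resummation transparent is the falling-factorial identity
$$\sum_{k=0}^{\infty}\frac{(-1)^{k}(a+k)_{m}\,x^{a/2+k}}{k!\,\Gamma(a+k+1)\,2^{a+2k}}=\left(\frac{\sqrt{x}}{2}\right)^{m}J_{a-m}(\sqrt{x}),\qquad (a+k)_{m}:=\prod_{j=0}^{m-1}(a+k-j),$$
which follows immediately from $(a+k)_{m}/\Gamma(a+k+1)=1/\Gamma(a+k+1-m)$ and the defining series of $J_{a-m}$. Hence it suffices to re-expand the coefficient of each $N^{-j}$ in the falling-factorial basis $\{(a+k)_{m}\}$, after which every surviving monomial delivers a single $x^{m/2}J_{a-m}(\sqrt{x})$. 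At order $N^{-1}$ the elementary identity $(a+k)(a-k+1)=2a(a+k)-(a+k)_{2}$ reproduces exactly the combination $\tfrac{a}{2}x^{1/2}J_{a-1}(\sqrt{x})-\tfrac{x}{8}J_{a-2}(\sqrt{x})$ stated in \eqref{eq:a3}; performing the same procedure on $c_{2}(a,k)$, expressed in the basis $(a+k)_{0},\dots,(a+k)_{4}$, yields the three-term combination at order $N^{-2}$.

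The main (though routine) obstacle is justifying the extension of the upper summation limit from $N$ to $\infty$ and bounding the remainder uniformly for $x$ in a compact subset of $\mathbb{R}_{>0}$. For $k$ bounded by any fixed constant, $R_{N,a,k}$ is uniformly bounded in $N$ and the asymptotic expansion of $R_{N,a,k}$ is valid with the claimed error; for $k$ comparable to $N$, Stirling gives a direct upper bound on $R_{N,a,k}$ which, combined with the Bessel-type factorial decay of $1/[k!\,\Gamma(a+k+1)]$ and the boundedness of $x^{k}/4^{k}$ on compacts, renders the truncation tail exponentially small in $N$. Alternatively, the whole expansion could be produced from the contour representation
$$L_{N}^{(a)}(y)=\frac{1}{2\pi i}\oint\frac{e^{-yt/(1-t)}}{t^{N+1}(1-t)^{a+1}}\,dt$$
via the substitution $t=1-u/\sqrt{N}$, generating a Bessel integral at leading order and Taylor corrections at each successive order; this second route makes the required uniformity transparent but requires a little more setup.
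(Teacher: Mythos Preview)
Your series-expansion approach is correct and genuinely different from the paper's. The paper starts from the contour integral
\[
2^{a}L_N^{(a)}(x/4N)=\int_{\mathcal C}\frac{dz}{2\pi i}\,e^{-xz/8N}\,\frac{(z+2)^{N+a}}{z^{N+1}},
\]
rescales $z\mapsto 4Nz/\sqrt{x}$, expands $(1+y/N)^{N}=e^{y}\bigl(1-\tfrac{y^{2}}{2N}+\tfrac{3y^{4}+8y^{3}}{24N^{2}}+\cdots\bigr)$ with $y=\sqrt{x}/(2z)$, and reads off each correction directly as a single Hankel integral equal to $(\sqrt{x}/2)^{m}J_{a-m}(\sqrt{x})$. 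Your route instead expands the gamma ratio $R_{N,a,k}$ in the explicit series and resums via the falling-factorial identity, which is exactly equivalent term-by-term (the monomials $y^{m}$ in the contour expansion correspond to your $(a+k)_{m}$ shifts). The contour method has two practical advantages: the $1/N^{j}$ coefficients are generated automatically from the expansion of $(1+y/N)^{N+a}$ without a separate Stirling computation, and uniformity on compacts in $x$ is immediate because the contour is fixed and the integrand and its error term are bounded there --- you never need to split the sum into ``$k$ bounded'' and ``$k$ comparable to $N$'' ranges. Your approach, on the other hand, is entirely elementary (no contour deformation) and makes the mechanism by which each Bessel index shift arises very explicit through the $(a+k)_{m}$ basis. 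One small remark: in your aside about the generating-function contour, the substitution $t=1-u/\sqrt{N}$ does not give the Bessel limit; the correct local variable is $t=1-u/N$, which reproduces exactly the integral the paper uses after a M\"obius change of variables.
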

	\begin{proof}
		Begin with the integral representation for the Laguerre polynomials,
		\begin{equation*}
			2^a L^{(a)}_N(x/4N) = \int_\mathcal{C} {dz \over 2\pi i}
			\, e^{-xz/8N} {(z+2)^{N+a}\over z^{N+1}}
		\end{equation*}
		where $\mathcal{C}$ is a positively oriented Hankel loop contour that encloses the origin but does not contain $z=-2$. Change variables $z \mapsto 4Nz/\sqrt{x}$
		and use the elementary expansion
		\begin{equation*}
			\left(
			1 + { y\over N}
			\right)^N = e^{y} \left( 1 - {y^2\over 2N} + {3y^4 + 8y^3 \over 24N^2} + \cdots \right).
		\end{equation*}
		Then (\ref{eq:a3}) follows by recognising
		\begin{equation*}
			J_a(x) = \int_\mathcal{C} {dz \over 2\pi i} \, e^{{x\over 2}(z-1/z)} z^{-a-1}.
		\end{equation*}
		
	\end{proof}
	
	\begin{proposition}
		For large $N$, and valid uniformly for $x$ and $y$  contained in a compact set on the positive half line,
		\begin{equation}  \label{eq:a4}
		{1\over 4N}K_N \left( {x\over 4N} , {y\over 4N} \right)
		 = K_\infty^{(a)}(x,y) + {1\over N} L_1^{(a)}(x,y) + {1\over N^2}L_2^{(a)}(x,y) + O\left( {1\over N^3} \right),
		\end{equation}
		where
		\begin{align}  
		K_\infty^{(a)}(x,y)&= {\sqrt{y}J_a(\sqrt{x})J_{a}'(\sqrt{y})-\sqrt{x}J_{a}'(\sqrt{x})J_a(\sqrt{y})\over 2(x-y)}, \label{eq:a5}
		\\
		L_1^{(a)}(x,y)&= {a\over 8} J_a(\sqrt{x}) J_a(\sqrt{y}), \label{eq:a6}
		\end{align}
		and
		\begin{multline} \label{eq:a7}
		L_2^{(a)}(x,y) = -{1\over 192} \bigg[
		(a^2+x+y)J_a(\sqrt{x})J_a(\sqrt{y}) + (xy)^{1/2} J_a'(\sqrt{x})J_a'(\sqrt{y})
		\\
		-(3a^2-2)\bigg( 
		\sqrt{y}J_a(\sqrt{x})J_a'(\sqrt{y})+\sqrt{x}J_a'(\sqrt{x})J_a(\sqrt{y})
		\bigg)
		\bigg].
		\end{multline}
		In the above $J_a'(\sqrt{w}) := {d \over d u} J_a(u)  |_{u= \sqrt{w}}$.
	\end{proposition}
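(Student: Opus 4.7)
The plan is to substitute the Laguerre asymptotic expansion (\ref{eq:a3}) into the Christoffel--Darboux form (\ref{eq:a2}) of the kernel and collect terms order by order in $1/N$. All non-polynomial ingredients of (\ref{eq:a2}) admit elementary expansions after the scaling $x\mapsto x/(4N)$, $y\mapsto y/(4N)$: the exponential gives $e^{-(x+y)/(8N)} = 1 - (x+y)/(8N) + (x+y)^2/(128N^2) + O(N^{-3})$, and the gamma-function ratio $N!/\Gamma(N+a) = N^{1-a}\bigl(1 + c_1(a)/N + c_2(a)/N^2 + O(N^{-3})\bigr)$ is obtained from Stirling. Combined with the factors $(2N)^{2a}$ arising from applying (\ref{eq:a3}) to both $L_N^{(a)}(x/4N)$ and $L_N^{(a)}(y/4N)$, these assemble into an overall finite prefactor with a computable expansion in $1/N$.

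The second step is to write (\ref{eq:a3}) at index $a$ and at index $a-1$, at both arguments $x/(4N)$ and $y/(4N)$, and form the antisymmetric numerator in (\ref{eq:a2}) divided by $x-y$. At leading order one obtains
\begin{equation*}
\frac{\sqrt{y}\,J_a(\sqrt{x})J_{a-1}(\sqrt{y}) - \sqrt{x}\,J_{a-1}(\sqrt{x})J_a(\sqrt{y})}{2(x-y)},
\end{equation*}
which, after using $J_{a-1}(u) = (a/u)J_a(u) + J_a'(u)$, collapses to (\ref{eq:a5}) because the $aJ_a(\sqrt x)J_a(\sqrt y)$ pieces cancel antisymmetrically. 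The $1/N$ and $1/N^2$ contributions are extracted similarly, producing an expression initially involving $J_{a-k}(\sqrt x)$ and $J_{a-k}(\sqrt y)$ for $k=0,1,\ldots,4$. These are then reduced to $J_a$ and $J_a'$ alone by repeated application of the three-term recurrences
\begin{equation*}
J_{\nu-1}(u) + J_{\nu+1}(u) = \frac{2\nu}{u}J_\nu(u), \qquad J_{\nu-1}(u) - J_{\nu+1}(u) = 2J_\nu'(u),
\end{equation*}
together with the Bessel equation $u^2J_a''(u) = (a^2-u^2)J_a(u) - uJ_a'(u)$ to eliminate any second derivatives that appear. Several of the intermediate contributions carry an explicit factor $(x-y)$ and thus become local (i.e.\ polynomial in $J_a(\sqrt x), J_a(\sqrt y)$ with no quotient), while those that do not cancel pairwise inside the antisymmetric structure.

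The main obstacle is purely organisational: assembling the second-order contribution and verifying that, after all Bessel reductions, the residual expression matches (\ref{eq:a7}) exactly, with the correct coefficients of the four independent symmetric bilinears $J_a(\sqrt x)J_a(\sqrt y)$, $\sqrt{xy}\,J_a'(\sqrt x)J_a'(\sqrt y)$, and $\sqrt{y}J_a(\sqrt x)J_a'(\sqrt y) + \sqrt{x}J_a'(\sqrt x)J_a(\sqrt y)$, and with no residual terms in $J_{a\pm1}$ or $J_a''$. The cleanest bookkeeping is to group contributions by the multi-index of Bessel indices appearing and to track separately the pieces of each order produced by (i) the $1/N$ correction of $L_N^{(a)}$ multiplied by the leading $L_N^{(a-1)}$, (ii) the leading $L_N^{(a)}$ multiplied by the $1/N$ correction of $L_N^{(a-1)}$, (iii) the expansions of the exponential and gamma prefactor, and at second order the analogous cross-products. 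Uniformity in $x,y$ on compact subsets of $\mathbb{R}_{>0}$ is inherited from the uniformity of (\ref{eq:a3}), since all subsequent manipulations are finite combinations of smooth functions on the relevant domain.
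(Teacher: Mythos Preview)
Your proposal is correct and follows essentially the same route as the paper: substitute the Laguerre expansion (\ref{eq:a3}) into the Christoffel--Darboux kernel (\ref{eq:a2}), expand the exponential and gamma prefactors via Stirling, and simplify the resulting Bessel expressions with the recurrences $\tfrac{2\alpha}{u}J_\alpha(u)=J_{\alpha-1}(u)+J_{\alpha+1}(u)$ and $2J_\alpha'(u)=J_{\alpha-1}(u)-J_{\alpha+1}(u)$. Your write-up simply makes explicit the bookkeeping that the paper's proof leaves to the reader.
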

	\begin{proof} 
		The expansion (\ref{eq:a3}) can readily be substituted into (\ref{eq:a2}). With use of Stirling's formula for the prefactor terms, and after simplification using the Bessel recurrences
		\begin{equation*}
		{2\alpha \over u}J_\alpha(u) = J_{\alpha-1}(u)+ J_{\alpha+1}(u), \quad 2J_\alpha'(u)= J_{\alpha-1}(u)-J_{\alpha+1}(u),
		\end{equation*}
		(\ref{eq:a4}) follows.
	\end{proof}

	It can be observed that (\ref{eq:a6}) is related to (\ref{eq:a5}) by the simple derivative operation given by
	\begin{equation} \label{eq:a8}
		L_1^{(a)}(x,y) = {a\over 2}\left(
		x {\partial\over \partial x} + y {\partial\over \partial y} + 1
		\right) K_\infty^{(a)}(x,y).
	\end{equation}
	The quantity $K_\infty^{(a)}(x,y)$ is well known as the limiting hard edge kernel \cite{Fo93a}.
	The task now is to appropriately centre the hard edge scaling as in (\ref{eq:13}) so that the term in (\ref{eq:a4}) proportional to $1/N$ cancels and explicitly express the leading correction term now proportional to $1/N^2$. It should be noted that since the scaling factor is multiplicative, the term (\ref{eq:a7}) is dependent on the scaling variable up to $O(1/N^2)$. For this reason and for the ease of calculations, we work with the rescaled hard edge variable
	\begin{equation} \label{eq:a9}
		{x\over 4N+2a} = {x \over 4N} \left(
		1 - {a\over 2N} + {a^2\over 4N^2} - \cdots
		\right)
	\end{equation}
	which agrees with  (\ref{eq:13}) in the first two terms for large $N$. Making the replacement $4N \mapsto 4N + 2a$, then performing a Taylor series expansion of
	 (\ref{eq:a4}) making use of the relation (\ref{eq:a8})  eliminates the $O(1/N)$ term in (\ref{eq:a4}).

	\begin{proposition}
		For large $N$,
		\begin{equation} \label{eq:a10}
			{1\over 4N+2a} K_N\left( {x\over 4N+2a}, {y\over 4N+2a} \right) 
			=K_\infty^{(a)}(x,y) + {1\over N^2}\hat{L}_2^{(a)}(x,y) + O\left( {1\over N^3} \right),
		\end{equation}
		where
		\begin{multline} \label{eq:a11}
		\hat{L}_2^{(a)}(x,y) = -{1\over 192} \bigg(
		(a^2+x+y)J_a(\sqrt{x})J_a(\sqrt{y}) + (xy)^{1/2} J_a'(\sqrt{x})J_a'(\sqrt{y})
		\\
		+2\sqrt{y}J_a(\sqrt{x})J_a'(\sqrt{y})
		+2\sqrt{x}J_a'(\sqrt{x})J_a(\sqrt{y})
		\bigg).
		\end{multline}
	\end{proposition}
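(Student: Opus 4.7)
The plan is to start from the expansion (\ref{eq:a4}), specialise it to the rescaled hard edge variable, and reorganise the $1/N$ expansion using the derivative identity (\ref{eq:a8}). Set $M := N+a/2$ so that $x/(4N+2a) = x/(4M)$, and introduce $X = xN/M$, $Y = yN/M$, chosen so that $X/(4N) = x/(4M)$. Then
\begin{equation*}
\frac{1}{4N+2a}K_N\Big(\frac{x}{4M},\frac{y}{4M}\Big) = \frac{N}{M}\cdot\frac{1}{4N}K_N\Big(\frac{X}{4N},\frac{Y}{4N}\Big),
\end{equation*}
and (\ref{eq:a4}) applied to the right-hand side gives a $1/N$ expansion in which both the prefactor $N/M$ and the arguments $(X,Y)$ depend on $N$. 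Writing $\varepsilon := 1 - N/M = a/(2N) - a^2/(4N^2) + O(N^{-3})$, one has $X = x(1-\varepsilon)$, $Y = y(1-\varepsilon)$, and the prefactor is $1-\varepsilon$.

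Next I would Taylor expand each of $K_\infty^{(a)}(X,Y)$, $L_1^{(a)}(X,Y)$, $L_2^{(a)}(X,Y)$ in $\varepsilon$ around $(x,y)$, using the Euler operator $D := x\partial_x + y\partial_y$:
\begin{equation*}
f(X,Y) = f(x,y) - \varepsilon\, Df(x,y) + \tfrac{\varepsilon^2}{2}(D^2-D)f(x,y) + O(\varepsilon^3),
\end{equation*}
(the shift of $-D$ at second order coming from the identity $x^2\partial_x^2 + 2xy\partial_x\partial_y + y^2\partial_y^2 = D^2 - D$). Multiplying through by $1-\varepsilon$ and collecting powers of $1/N$, the coefficient of $1/N$ reads $-\tfrac{a}{2}(D+1)K_\infty^{(a)} + L_1^{(a)}$, which vanishes identically by (\ref{eq:a8}); this is precisely the cancellation asserted in the text. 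Collecting the $1/N^2$ contributions from: the second-order Taylor expansion of $K_\infty^{(a)}$; the product of the $\varepsilon$-correction in the prefactor with the $\varepsilon$-correction in $K_\infty^{(a)}$; the first-order expansion of $L_1^{(a)}/N$; and the leading $L_2^{(a)}/N^2$, a short algebraic reorganisation using (\ref{eq:a8}) once more yields
\begin{equation*}
\hat{L}_2^{(a)}(x,y) = L_2^{(a)}(x,y) - \frac{a^2}{8}(D^2 + D)K_\infty^{(a)}(x,y) = L_2^{(a)}(x,y) - \frac{a}{4}D L_1^{(a)}(x,y).
\end{equation*}

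Finally I would convert the second form to an explicit expression in Bessel functions. Since $L_1^{(a)}(x,y) = \tfrac{a}{8}J_a(\sqrt{x})J_a(\sqrt{y})$ and $x\partial_x J_a(\sqrt{x}) = \tfrac{1}{2}\sqrt{x}J_a'(\sqrt{x})$, one obtains
\begin{equation*}
-\tfrac{a}{4}DL_1^{(a)}(x,y) = -\frac{a^2}{64}\Big(\sqrt{x}J_a'(\sqrt{x})J_a(\sqrt{y}) + \sqrt{y}J_a(\sqrt{x})J_a'(\sqrt{y})\Big),
\end{equation*}
and adding this to (\ref{eq:a7}) converts the coefficient $-(3a^2-2)$ of the cross terms into $+2$, while leaving the $(a^2+x+y)J_a J_a$ and $(xy)^{1/2}J_a'J_a'$ terms unchanged. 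This is exactly (\ref{eq:a11}).

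The main obstacle is the careful bookkeeping in the second step: at $1/N^2$ there are four distinct sources of contribution (the two-term Taylor expansion of $K_\infty^{(a)}$, the cross term between the prefactor and the first-order shift of $K_\infty^{(a)}$, the first-order shift of $L_1^{(a)}/N$, and $L_2^{(a)}/N^2$), and it must be verified that after using (\ref{eq:a8}) these collapse cleanly to $-\tfrac{a^2}{8}(D^2+D)K_\infty^{(a)}$ without spurious lower-order pieces. Once this is confirmed, the Bessel-function identification in the final step is a direct computation that requires no recurrences beyond the product rule.
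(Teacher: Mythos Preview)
Your proposal is correct and follows essentially the same approach as the paper: the text preceding the proposition states that the result is obtained by making the replacement $4N \mapsto 4N+2a$ in (\ref{eq:a4}), Taylor expanding, and using (\ref{eq:a8}) to cancel the $O(1/N)$ term, which is precisely what you carry out in detail. Your bookkeeping at order $1/N^2$ is accurate, and the final identification $\hat{L}_2^{(a)} = L_2^{(a)} - \tfrac{a}{4}DL_1^{(a)}$ together with the Bessel computation reproduces (\ref{eq:a11}) exactly.
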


	Immediate from (\ref{eq:a10}) is that the spectral density (or the one-point correlation function) at the hard edge has the large $N$ expansion
	\begin{equation}
		{1\over 4N+2a} \rho_N \left( {x\over 4N+2a} \right)
		= \rho_{\infty,0}(x) + {1\over N^2} \hat{\rho}_{\infty,2}(x) + O\left( {1\over N^3} \right),
	\end{equation}
	with
	\begin{align}
		\rho_{\infty,0}(x) &= {1\over 4}\left( J_a(\sqrt{x})^2 - J_{a+1}(\sqrt{x})J_{a-1}(\sqrt{x}) \right)
		\\
		\hat{\rho}_{\infty,2}(x) &= -{1\over 192} \left(
		(2x+a^2)J_a(\sqrt{x})^2 + 4\sqrt{x}J_a(\sqrt{x})J_a'(\sqrt{x}) + xJ_a'(\sqrt{x})^2
		\right). \label{eq:rho2}
	\end{align}
	
	\begin{remark}
		The functional form (\ref{eq:rho2}) appears as a sum of products of a special function (i.e. the Bessel function) and its derivative. This is analogous to the known density in the soft edge for both Gaussian and Laguerre cases where the special function of interest is the Airy function (see e.g. \cite[Ch.~7]{Fo10}). 
	\end{remark}

	As shown in \cite{BFM17}, the integral operator formula (\ref{eq:12}) for a bounded interval $J = (0,s/(4N+2a))$ can be expanded in large $N$ up to its first correction. Let $\mathbb{K}_{\infty,(0,s)}$ denote the integral operator on $(0,s)$ with kernel (\ref{eq:a5}) and $\hat{\mathbb{L}}_{2,(0,s)}$ denote the integral operator on $(0,s)$ with kernel (\ref{eq:a11}). In keeping with the notation of \cite{BFM17}, define
	\begin{equation}
	\Omega( \mathbb{K}_{\infty,(0,s)} ) : \hat{\mathbb{L}}_{2,(0,s)}
	= -\det(\mathbb{I} - \mathbb{K}_{\infty,(0,s)} )
	\Tr( (\mathbb{I} - \mathbb{K}_{\infty,(0,s)} )^{-1} \hat{\mathbb{L}}_{2,(0,s)} ).
	\end{equation}
	Then we have
	\begin{equation} \label{eq:thm1}
	\det( \mathbb{I} - \xi \mathbb{K}_{N,(0,s/(4N+2a))} ) = 
	\det(\mathbb{I} - \xi \mathbb{K}_{\infty,(0,s)} )
	+ \frac{1}{N^2} \Omega( \xi \mathbb{K}_{\infty,(0,s)} ) : \xi \hat{\mathbb{L}}_{2,(0,s)}
	+ O\left( {1\over N^3} \right).
	\end{equation}
	This framework has the numerical advantage, in the methods demonstrated by Bornemann \cite{Bo10a,Bo10b}, in that the evaluation of $\det( \mathbb{I} - \xi \mathbb{K}_J )$ converges exponentially fast if the kernel is analytic in a neighbourhood of $J$. Explicit methodology for the numerical computation of $\Omega( \mathbb{K} ) : \mathbb{L}$ has been presented in \cite{BFM17} and later used in \cite{FT18}.

	\subsection{Painlev\'e theory approach}
	The operator theoretic formula (\ref{eq:a1})  can be reformulated in terms of the $\tau$-function expression
	\cite{TW94c,FW01a}
	\begin{equation} \label{eq:b1}
		\det( \mathbb{I} - \xi \mathbb{K}_{N,J} )
		= \exp\left(
		\int_0^s U_N^{(a)}(x;\xi) {dx \over x}
		\right),
	\end{equation}
	where $U_N^{(a)}$ is a solution to the $\sigma$-form Painlev\'e V differential equation
	\begin{equation} \label{eq:b2}
		(x\sigma'')^2 - \left(
		\sigma -x\sigma'+2(\sigma')^2+(a+2N)\sigma'
		\right)^2
		+4 (\sigma')^2 (\sigma'+N)(\sigma'+a+N)=0
	\end{equation}
	subject to the boundary condition
	\begin{equation} \label{eq:b3}
		U_N^{(a)}(x;\xi) \underset{x\to 0^+}{\sim} -\xi x K_N(x,x).
	\end{equation}
	
	Under the hard edge scaled variable, the expansion
	\begin{equation} \label{eq:b4}
		U_N^{(a)}\left( {x\over 4N}; \xi \right)=\sigma_0(x;\xi)+{1\over N}\sigma_1(x;\xi) + {1\over N^2}\sigma_2(x;\xi) + O\left(
		{1\over N^3}
		\right)
	\end{equation}
	is required for consistency with the results outlined in the operator theoretic approach.
	
	\begin{proposition} \label{prop4}
		Let $K_{\infty}^{(a)}$, $L_1^{(a)}$ be specified by (\ref{eq:a5}), (\ref{eq:a6}). Then $\sigma_0(x;\xi)$ satisfies the $\sigma$-form Painlev\'e III equation
		\cite{TW94b, FW01a}
		\begin{equation}\label{eq:de1}
			(x\sigma'')^2 + \sigma'(1+4\sigma')(x\sigma'-\sigma)
			= (a\sigma')^2
		\end{equation}
		subject to the boundary condition
		\begin{equation}
			\sigma_0(x;\xi) \underset{x\to 0^+}{\sim} -\xi x K_{\infty}^{(a)}(x,x)
		\end{equation}
		and $\sigma_1(x;\xi)$ satisfies the second order linear differential equation
		\begin{equation} \label{eq:de2}
			A(x)\sigma''(x) + B(x)\sigma'(x) + C(x)\sigma(x) = D(x)
		\end{equation}
		where
		\begin{align}
		A(x)&=4x^2\sigma_0''(x) \nonumber
		\\
		B(x)&=24x\sigma_0'(x)^2-16\sigma_0(x)\sigma_0'(x)+4(x-a^2)\sigma_0'(x)-2\sigma_0(x) \nonumber
		\\
		C(x)&=-2\sigma_0'(x) (4\sigma_0'(x)+1) \nonumber
		\\
		D(x)&= -a\sigma_0'(x)(x\sigma_0'(x)-\sigma_0(x))
		\end{align}
		subject to the boundary condition
		\begin{equation}
		\sigma_1(x;\xi) \underset{x\to 0^+}{\sim} -\xi x L_{1}^{(a)}(x,x).
		\end{equation}
	\end{proposition}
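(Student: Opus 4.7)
The plan is to derive both equations by substituting the hard edge rescaling $y = x/(4N)$ into the $\sigma$-Painlev\'e V equation (\ref{eq:b2}), then expanding in powers of $1/N$. With $\sigma(x) := U_N^{(a)}(x/(4N);\xi)$ and $\tau := U_N^{(a)}$, the chain rule gives $\tau'(y) = 4N\sigma'(x)$, $y\tau'(y) = x\sigma'(x)$ and $y\tau''(y) = 4Nx\sigma''(x)$. Inserting these into (\ref{eq:b2}) produces an equation whose terms have orders ranging from $N^4$ down to $N^0$.

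The crucial step is to verify that all contributions of order $N^4$ and $N^3$ cancel exactly. Writing the middle bracket of (\ref{eq:b2}) after substitution as $\sigma - x\sigma' + Q + R$ with $Q := 8N^2\sigma'(4\sigma'+1)$ and $R := 4Na\sigma'$, the cross-terms $-Q^2$ and $-2QR$ coming from squaring are precisely cancelled by the two pieces of $4(\tau')^2(\tau'+N)(\tau'+a+N) = 64N^4(\sigma')^2(4\sigma'+1)^2 + 64N^3 a(\sigma')^2(4\sigma'+1)$. Dividing what remains by $16N^2$ yields the exact identity
\begin{equation*}
(x\sigma'')^2 + \sigma'(1+4\sigma')(x\sigma' - \sigma) - a^2(\sigma')^2 = \frac{a\sigma'(\sigma - x\sigma')}{2N} + \frac{(\sigma - x\sigma')^2}{16N^2}.
\end{equation*}

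Next I would substitute the expansion (\ref{eq:b4}) and match orders. At $O(1)$ the right hand side vanishes, recovering (\ref{eq:de1}) for $\sigma_0$. At $O(1/N)$, linearising the left hand side about $\sigma_0$ in the direction $\sigma_1$ and equating to $\tfrac{1}{2}a\sigma_0'(\sigma_0 - x\sigma_0')$ produces a second order linear inhomogeneous ODE for $\sigma_1$. Explicitly, computing the three partial derivatives of the operator $G[\sigma] := (x\sigma'')^2 + \sigma'(1+4\sigma')(x\sigma' - \sigma) - a^2(\sigma')^2$ at $\sigma_0$ and multiplying through by $2$ reproduces exactly the coefficients $A(x), B(x), C(x)$ and inhomogeneity $D(x)$ stated in the proposition.

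The small-$x$ boundary conditions follow directly from (\ref{eq:b3}): $\sigma(x;\xi) \sim -\xi x \cdot \tfrac{1}{4N} K_N(x/(4N),\, x/(4N))$ as $x \to 0^+$, and the diagonal of (\ref{eq:a4}) from Proposition 2 immediately separates this asymptotic into $\sigma_0(x;\xi) \sim -\xi x K_\infty^{(a)}(x,x)$ and $\sigma_1(x;\xi) \sim -\xi x L_1^{(a)}(x,x)$. The principal obstacle is the initial algebraic cancellation after rescaling; once that structural identity is in hand, everything else is a mechanical Taylor expansion and identification of coefficients.
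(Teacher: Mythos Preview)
Your proposal is correct and follows essentially the same approach as the paper: substitute the hard edge variable $x/(4N)$ and the expansion (\ref{eq:b4}) into the $\sigma$-PV equation (\ref{eq:b2}), then read off (\ref{eq:de1}) and (\ref{eq:de2}) at orders $N^2$ and $N$ respectively, with the boundary conditions inherited from (\ref{eq:b3}) via the kernel expansion (\ref{eq:a4}). The paper's proof is stated in a single sentence without the intermediate algebra, whereas you have (accurately) supplied the explicit cancellation of the $N^4$ and $N^3$ terms and the linearisation that produces $A,B,C,D$.
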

	\begin{proof}
	Substitute the proposed expansion given by (\ref{eq:b4}) and replace $x$ with the usual hard edge scaled variable $x/4N$ into the $\sigma$-form Painlev\'e equation (\ref{eq:b2}). Comparing the coefficients proportional to $N^2$ and $N$ gives the equations (\ref{eq:de1}) and (\ref{eq:de2}) respectively. The boundary condition follows directly from (\ref{eq:b3}).
	\end{proof}
	
	It can be verified with use of (\ref{eq:de1}) that
	\begin{equation}
		\sigma_1(x;\xi)={a\over 2} x \sigma_0'(x;\xi)
	\end{equation}
	is a solution to (\ref{eq:de2}). This is in keeping with the feature that the rescaled hard edge variable (\ref{eq:a9}) gives a correction term of $O(1/N^2)$. Therefore consider the new expansion
	\begin{equation}
	U_N^{(a)} \left( {x\over 4N+2a}; \xi \right) = \sigma_0(x;\xi) + \frac{1}{N^2}\hat{\sigma}_2(x;\xi) + O\left(
	{1\over N^3}
	\right)
	\end{equation}
	and repeat the exercise outlined below Proposition \ref{prop4} by substituting the above equation into (\ref{eq:b2}) and using the variable (\ref{eq:a9}).
	This shows that $\hat{\sigma}_2$ too satisfies a second order linear differential equation.
	
	\begin{proposition}\label{P4}
		The correction $\hat{\sigma}_2(x;\xi)$ satisfies the second order linear differential equation (\ref{eq:de2}) but with $D(x)$ replaced with
		\begin{equation}
			\hat{D}_2(x)= {1\over 8}(x\sigma_0'-\sigma_0)^2 ,
		\end{equation}
		subject to the boundary condition
		\begin{equation}
			\hat{\sigma}_2(x) \underset{x\to 0^+}{\sim} -\xi x \hat{L}_{2}^{(a)}(x,x)
		\end{equation}
		with $\hat{L}_{2}^{(a)}$ specified by (\ref{eq:a11}). Furthermore, 
		\begin{multline} \label{eq:thm2}
			E_{N,2}\left(
			0;\left(0, {s\over 4N+2a} \right); x^a e^{-x}; \xi
			\right)
			\\ =
			\exp\left(\int_0^s \sigma_0(x;\xi) \, {dx\over x}\right)
			+ {1\over N^2} \left(\int_0^s \hat{\sigma}_2(x;\xi) \, {dx\over x} \right)
			\exp\left(\int_0^s \sigma_0(x;\xi) \, {dx\over x} \right)
			+ O\left(
			{1\over N^3}
			\right).
		\end{multline}
	\end{proposition}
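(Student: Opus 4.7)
The plan is to mirror the derivation of Proposition \ref{prop4}, substituting the ansatz
\begin{equation*}
U_N^{(a)}\!\left(\tfrac{x}{4N+2a};\xi\right)=\sigma_0(x;\xi)+\tfrac{1}{N^2}\hat{\sigma}_2(x;\xi)+O(N^{-3})
\end{equation*}
into the $\sigma$-Painlev\'e V equation (\ref{eq:b2}). Set $c:=4N+2a$, $y=x/c$, and $\phi(x):=U_N^{(a)}(x/c;\xi)$, so that by the chain rule $\sigma'(y)=c\phi'(x)$ and $y\sigma''(y)=cx\phi''(x)$. Substituting these into (\ref{eq:b2}) and using the algebraic identities $c(a+2N)=c^2/2$ (valid precisely when $c=4N+2a$), $(c\phi'+N)(c\phi'+a+N)=c^2(\phi')^2+(c^2/2)\phi'+N(a+N)$, and $-c^2/4+4N(a+N)=-a^2$, careful bookkeeping collapses (\ref{eq:b2}) to the clean form
\begin{equation*}
(x\phi'')^2+\phi'(1+4\phi')(x\phi'-\phi)-a^2(\phi')^2=\frac{(x\phi'-\phi)^2}{c^2}.
\end{equation*}
The left-hand side is exactly the $\sigma$-PIII polynomial of (\ref{eq:de1}), and the right-hand side is an explicit $O(1/N^2)$ perturbation; this is the structural reason why the $1/N$ order cancels automatically under the scaling (\ref{eq:a9}).

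Next I would insert $\phi=\sigma_0+N^{-2}\hat{\sigma}_2+O(N^{-3})$ into the reduced equation. The leading order reproduces (\ref{eq:de1}) for $\sigma_0$, in agreement with Proposition \ref{prop4}. At order $1/N^2$, the LHS linearises about $\sigma_0$ to yield $\partial_{\sigma''}F\,\hat\sigma_2''+\partial_{\sigma'}F\,\hat\sigma_2'+\partial_{\sigma}F\,\hat\sigma_2$ at $\sigma_0$, where $F(\sigma,\sigma',\sigma'')$ is the LHS polynomial of (\ref{eq:de1}). A direct calculation of these three partial derivatives and comparison with (\ref{eq:de2}) shows that $A,B,C$ are exactly twice these quantities evaluated at $\sigma_0$; consistency with Proposition \ref{prop4} is thus built in, and the $1/N^2$ contribution from the LHS reads $\tfrac{1}{2}(A\hat\sigma_2''+B\hat\sigma_2'+C\hat\sigma_2)$. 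The RHS at leading order is $(x\sigma_0'-\sigma_0)^2/c^2=(x\sigma_0'-\sigma_0)^2/(16N^2)+O(N^{-3})$. Matching coefficients of $1/N^2$ and multiplying through by $2$ gives
\begin{equation*}
A(x)\hat\sigma_2''+B(x)\hat\sigma_2'+C(x)\hat\sigma_2=\tfrac{1}{8}(x\sigma_0'-\sigma_0)^2=\hat{D}_2(x),
\end{equation*}
which is the first claim.

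The boundary condition follows immediately from (\ref{eq:b3}) combined with the rescaled kernel expansion (\ref{eq:a10}):
\begin{equation*}
U_N^{(a)}\!\left(\tfrac{x}{4N+2a};\xi\right)\sim -\xi x\!\left(K_\infty^{(a)}(x,x)+\tfrac{1}{N^2}\hat{L}_2^{(a)}(x,x)+O(N^{-3})\right)\quad\text{as }x\to 0^+,
\end{equation*}
and equating the $1/N^2$ coefficients yields $\hat{\sigma}_2(x;\xi)\sim -\xi x\,\hat{L}_2^{(a)}(x,x)$. Finally, the integral representation (\ref{eq:thm2}) follows from (\ref{eq:b1}) via the change of variables $t=x/(4N+2a)$ (which preserves $dt/t=dx/x$): the exponent becomes $\int_0^s\sigma_0(x;\xi)\,dx/x+N^{-2}\int_0^s\hat\sigma_2(x;\xi)\,dx/x+O(N^{-3})$, and the claim is then immediate from $\exp(A+\varepsilon B+\cdots)=e^A(1+\varepsilon B+O(\varepsilon^2))$ with $\varepsilon=N^{-2}$.

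The main technical hurdle is the algebraic reduction in the first step: one must track the powers of $c$ produced by the derivative rescalings (which scale $\sigma'$ and $\sigma''$ by $c$ and $c^2$ respectively) together with the explicit coefficients $N$ and $a+2N$ in (\ref{eq:b2}), verify that the quartic and cubic terms in $c\phi'$ from expanding the square cancel against those from the $4(\phi')^2(c\phi'+N)(c\phi'+a+N)$ factor, and check that the residual quadratic coefficient reduces via $-c^2/4+4N(a+N)=-a^2$. Once this reduction is in hand the remaining linearisation and boundary-matching steps are routine.
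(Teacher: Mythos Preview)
Your proof is correct and follows essentially the same approach as the paper, which simply instructs the reader to ``repeat the exercise outlined below Proposition \ref{prop4}'' by substituting the expansion into (\ref{eq:b2}) with the rescaled variable (\ref{eq:a9}) and comparing coefficients. Your intermediate reduction of (\ref{eq:b2}) to the exact perturbed-PIII form $(x\phi'')^2+\phi'(1+4\phi')(x\phi'-\phi)-a^2(\phi')^2=(x\phi'-\phi)^2/c^2$ via the identities $c(a+2N)=c^2/2$ and $-c^2/4+4N(a+N)=-a^2$ is a particularly clean way to organise the algebra, making the absence of any $1/N$ term and the form of $\hat D_2$ transparent, but the method is the same.
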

	
	\begin{remark}
	Results analogous to the above are known for the soft edge scaling of the Laguerre and Gaussian unitary ensembles \cite{FT18}, although as emphasised
	in the Introduction, upon optimal centring and scaling, the leading correction term is of order $N^{-2/3}$. An equation with a structure analogous to
	(\ref{eq:de2}) has also appeared in a recent study of the two-point diagonal spin-spin correlation function of the two-dimensional Ising model
	\cite{FPTW19}.
	\end{remark}
	
	Let $p_{N,\beta}(s;x^a e^{-x};\xi)$ denote the probability density function for the smallest eigenvalue in the ensemble (\ref{eq:1}) with
	$w_s(x) = x^a e^{-x} \chi_{x > 0}$. As noted below (\ref{eq:a2}) the parameter $\xi$ is a thinning parameter: each eigenvalue is deleted independently
	with probability $1 - \xi$. This quantity relates to the gap probability $E_{N,\beta}$ according to
	\begin{equation}\label{Gp}
	p_{N,\beta}(s;x^a e^{-x};\xi) = - {d \over d s}  E_{N,\beta}(0;(0,s);x^a e^{-x};\xi).
	\end{equation}
	According to the above working, this probability density for $\beta = 2$ has a well defined hard edge scaling limit, obtained by
	replacing $s$ by $s_{N,\beta}$ as specified by (\ref{eq:a9}), and the use of this variable gives an optimal rate
	of convergence at $O(1/N^2)$. Moreover the above working gives formulas implying the functional form of the correction term at this order.
	
	\begin{figure} 
		\includegraphics[width=0.7 \textwidth]{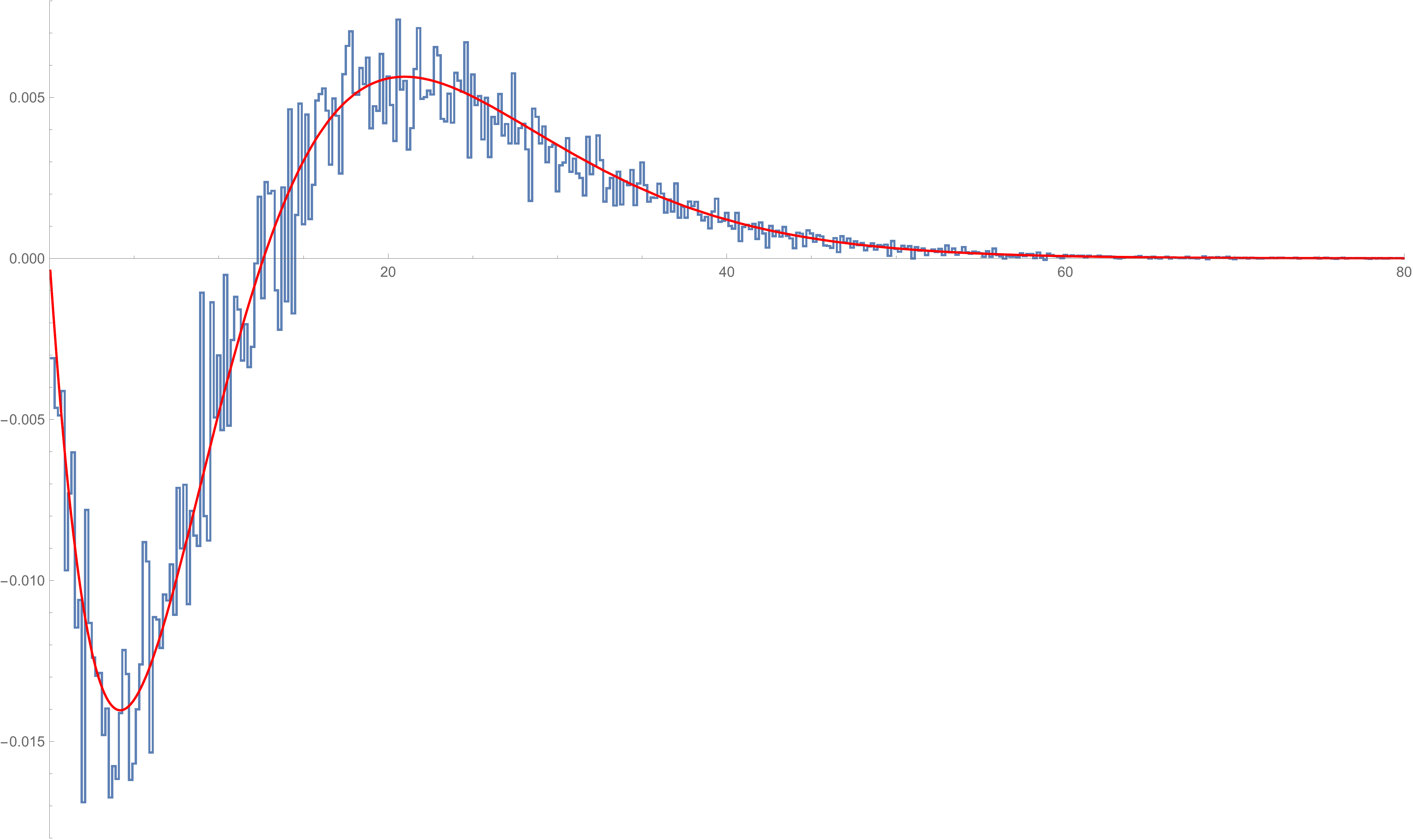}
		\caption{The leading correction term to the PDF of the smallest eigenvalue given by (\ref{eq:plot}) for $N=20$, $a=1$.  The histogram uses $500$ bins across the interval $0<s<100$ and is formed using $10^{10}$ samples.}
		\label{fig1}
	\end{figure}

	Figure \ref{fig1} is a numerical demonstration of the theoretical correction after rescaling the hard edge variables. 
	Let as abbreviate the notation $p_{N,\beta}(s;x^a e^{-x};\xi)$ in the case $\beta = 2$, $a=1$ and $\xi = 1$ as $p_N(s)$, and
	write $\lim_{N \to \infty} {\partial s_{N,2} \over \partial s} p_{N,2}(s_{N,2}) = p_{\infty}(s)$.
	The figure given by the blue steps is the difference
	\begin{equation} \label{eq:plot}
	N^2\left(  p_N^\# (s_{N,2}) - p_{\infty,0}(s) \right).
	\end{equation}
	Here, in keeping with (\ref{Gp}) and the leading term in (\ref{eq:thm1}),
	$$
	p_{\infty,0}(s) = - {d\over ds}\det(\mathbb{I} - \mathbb{K}_{\infty,(0,s)} ) \Big |_{a=1}
	$$
	is the limiting PDF of the smallest hard edge scaled eigenvalue and $p_N^{\#}(s)$ is the normalised histogram of the smallest eigenvalue from $10^{10}$ 
	numerically generated complex Wishart matrices scaled by the factor $(4N+2a)$
	with $N=20$, $a=1$.
	The red solid line is the predicted leading correction term of $O(1/N^2)$ implied by either (\ref{eq:thm1}) or (\ref{eq:thm2}) substituted in (\ref{Gp}).

\section{General $\beta$ case} \label{S3}	
\subsection{Background theory}
It was noted above that 
the matrix product $X^\dagger X$, where $X$ is an $n \times N$, $n \ge N$, matrix with complex standard Gaussian entries
has the eigenvalue PDF (\ref{eq:1})  (set $a = n -N$). In the same setting except that the entries are now real standard normals, (\ref{eq:1}) requires
modification to now read (see e.g.~\cite[Ch.~3]{Fo10})
	\begin{equation} \label{eq:1a}
		{1\over Z_{N,1}} \prod_{l=1}^N w_1(x_l) \prod_{1\leq j<k \leq N} |x_k-x_j|
	\end{equation}
	with $w_1(x)=x^{(n - N -1)/2} e^{-x/2}\chi_{x>0}$. Notice that here the Dyson index is $\beta = 1$. 
	A natural interpolation between (\ref{eq:1}) and (\ref{eq:1a}) is the PDF
\begin{equation} \label{eq:4}
	{1\over Z_{N,\beta}} \prod_{l=1}^N w_\beta(x_l) \prod_{1\leq j<k \leq N}  |x_k-x_j|^\beta
	\end{equation}
	with the weight $w_\beta(x)=x^{a} e^{-\beta x/2}\chi_{x>0}$, referred to as the Laguerre $\beta$ ensemble. For
	general $\beta > 0$ it can in fact be realised as specifying the PDF of the squared singular values of certain bi-diagonal matrices with independent
	entries \cite{DE02}.	
	
The exact evaluation of $E_{N,\beta}(0,(0,s); x^{a} e^{-\beta x/2})$ for $a \in \mathbb Z_{\ge 0}$ has been studied using methods of Selberg integral theory,
and associated special functions
 (see \cite[Ch.~13]{Fo10}). (Note that in distinction to the results of the previous section, it is not possible to include a general thinning parameter $\xi$ --- the
 results below are restricted to the case $\xi = 1$, which corresponds to no thinning.)
 To summarise the findings, introduce the generalised hypergeometric function
	\begin{equation} \label{eq:5}
		\,_{p}F_q^{(\alpha)} (a_1,\ldots,a_p;b_1,\ldots,b_q;x_1,\ldots,x_m)
		=
		\sum_{k=(k_1,\ldots,k_m)} 
		{1\over |k|!}
		{[a_1]_k^{(\alpha)} \cdots [a_p]_k^{(\alpha)} \over 
			[b_1]_k^{(\alpha)} \cdots [b_q]_k^{(\alpha)}} 
		C_k^{(\alpha)}(x_1,\ldots,x_m)
	\end{equation}
	where $k$ denotes a partition with
	\begin{equation*}
		|k| = \sum_{j=1}^m k_j, \quad
		[a]_k^{(\alpha)} = \prod_{j=1}^m {\Gamma(a-(j-1)/\alpha+k_j) \over \Gamma(a-(j-1)/\alpha)}
	\end{equation*}
	and $C_k^{(\alpha)}(x_1,\ldots,x_m)$ denotes  the renormalised symmetric Jack polynomials (see \cite[Eq.~13.1]{Fo10}). In particular when $m=1$, (\ref{eq:5}) recovers the usual definition of the classical hypergeometric function.
	
	It was shown \cite{Fo94b} (see also \cite[Proposition 13.2.6]{Fo10}) that given $a\in\mathbb{Z}_{\geq 0}$,
	\begin{equation} \label{eq:6}
		E_{N,\beta}(0,(0,s);x^{a} e^{-\beta x/2})=e^{-\beta Ns/2} \,_1 F_1^{(\beta/2)}(-N;2a/\beta;(-s)^a),
	\end{equation}
	where $\,_1F_1^{(\beta/2)}$ has the multi-dimensional integral representation
	\begin{multline} \label{eq:7}
	\,_1 F_1^{(\beta/2)}(-N;2a/\beta;(-s)^a)
	= {1\over M_a(2/\beta-1,N,2/\beta)} \int_{-1/2}^{1/2} \,dx_1 \cdots \int_{-1/2}^{1/2} \,dx_a
	\\
	\times \prod_{l=1}^a e^{2\pi i x_l (2/\beta-1)} (1+e^{-2\pi i x_l})^{N+(2/\beta-1)}
	e^{s e^{2\pi i x_l}}
	\prod_{1\leq j<k \leq a} |e^{2\pi i x_k} - e^{2\pi i x_j} |^{4/\beta}
	\end{multline}
	with
	\begin{equation} \label{eq:8}
		M_n(A,B,C) = \prod_{j=1}^n {\Gamma(1+A+B-C+jC)\Gamma(1+jC) \over \Gamma(1+A-C+jC) \Gamma(1+B-C+jC) \Gamma(1+C)}.
	\end{equation}
	Here the notation $(s)^a$ appearing in (\ref{eq:6}) denotes the $a$-tuple $(s,\ldots,s)$.

	\subsection{Hard edge scaling}
	We seek the large $N$ asymptotic expansion of (\ref{eq:7}), after replacing $s$ with the hard edge scaling $s/4N$, and under the assumption that $a$ is a positive integer. 
	Regarding the leading term, it is immediate from the definition (\ref{eq:5}), and the fact that $C_k^{(\alpha)}$ is a homogeneous function of its arguments, that
	(see \cite[Prop.~13.2.7]{Fo10})
	\begin{equation} \label{eq:9}
		E_\beta^\text{hard}(s;a)
		:=\lim_{N\to\infty} E_{N,\beta}(0,(0,s/4N);x^{a} e^{-\beta x/2})
		= e^{-\beta s/8} \,_0 F_1^{(\beta/2)}(2a/\beta;(s/4)^a).
	\end{equation}
	Moreover, beginning with (\ref{eq:7}) it is possible to express this $ \,_0 F_1^{(\beta/2)}$ function as an $a$-dimensional integral,
	\begin{multline}\label{eq:10}
		\,_0 F_1^{(\beta/2)}(2a/\beta;(s/4)^a)
		= {\Gamma(2/\beta)^a \over (2\pi)^a \Gamma(1+a)}
		\int_{-\pi}^\pi \, d\theta_1 \cdots \int_{-\pi}^\pi \, d\theta_a \,
		\prod_{l=1}^a e^{i\theta_l (2/\beta-1)} e^{s e^{i\theta_l}/4 +e^{-i\theta_l}}
		\\
		\times
		\prod_{1\leq j<k \leq a} |e^{i\theta_k} - e^{i\theta_j} |^{4/\beta}.
	\end{multline}
	This integral formula is a variant of the one given in earlier studies \cite{Fo94b}, \cite{Fo10}. In the latter the integrand involves the variable $s^{1/2}$
	instead of $s$ as above. The form given in (\ref{eq:10}) has the advantage of being better suited to the analysis of subleading terms in the large $N$
	expansion.
	
	To see how to obtain (\ref{eq:10}) from (\ref{eq:7}), and to furthermore extend the expansion beyond leading order, begin by
	changing variables $z_l = e^{2\pi i x_l}$ and then scaling $z_l \to N z_l$. Following this prescription, and using Cauchy's theorem to argue that the contour
	return the unit circle $\gamma$ (see \cite[Exercise 13.1 q4(ii)]{Fo10}) gives
	\begin{multline}
		\,_1 F_1(-N;2a/\beta; (-s/4N)^a )
		=
		{N^{a(2/\beta -1)} \over M_a(2/\beta-1,N,2/\beta)} {1\over (2\pi i)^a}
		\int_{\gamma} \, {dz_1\over z_1} \cdots \int_{\gamma} \, {dz_a\over z_a}
		\\
		\times \prod_{l=1}^a z_l^{2/\beta-1} e^{(N+2/\beta-1)\log(1+1/(N z_l))}
		e^{s z_l /4} \prod_{1\leq j<k\leq a} 
		\left( 1-{z_k\over z_j} \right)^{2/\beta} \left( 1-{z_j\over z_k} \right)^{2/\beta}.
	\end{multline}
	Expanding the logarithm then gives
	\begin{multline}\label{3.10}
		\,_1 F_1(-N;2a/\beta; (-s/4N)^a )
		=
		{N^{a(2/\beta -1)} \over M_a(2/\beta-1,N,2/\beta)} {1\over (2\pi)^a}
		\int_{-\pi}^\pi \, d\theta_1 \cdots \int_{-\pi}^\pi \, d\theta_a
		\\
		\times \prod_{l=1}^a e^{i \theta_l (2/\beta-1)} e^{s e^{i\theta_l}/4 + e^{-i\theta_l}}
		\left(
		1 + \left( {2\over \beta} -1 \right) {e^{-i\theta_l} \over N}
		- {1\over 2} {e^{-2i\theta_l} \over N } + O\left({1\over N^2}\right)
		\right) \prod_{1\leq j<k\leq a} | e^{i \theta_k} - e^{i \theta_j}|^{4/\beta}.
	\end{multline}
	Additionally the normalisation appearing in (\ref{eq:7}) can be checked via Stirling's formula to have the large $N$ form
	\begin{multline} \label{eq:c1}
		{1\over M_a(2/\beta-1,N,2/\beta)} = 
		\prod_{j=1}^a {\Gamma(2j/\beta) \Gamma(1+N+2(j-1)/\beta) \Gamma(1+2/\beta)
	\over \Gamma(N+2j/\beta) \Gamma(1+2j/\beta)	
	}
	\\
	= N^{a-2a/\beta}  {\Gamma(2/\beta)^a \over \Gamma(1+a) }
	\left(
	1 + {a^2 \over \beta N} \left( 1-{2\over \beta} \right) + O\left({1\over N^2}\right)
	\right).
	\end{multline}
	
	Substituting shows (\ref{eq:7}) exhibits the large $N$ form
	\begin{multline}\label{eq:c4}
	\,_1 F_1(-N;2a/\beta; (-s/4N)^a )
	=
	{\Gamma(2/\beta)^a \over (2\pi)^a \Gamma(1+a)}
	\int_{-\pi}^\pi \, d\theta_1 \cdots \int_{-\pi}^\pi \, d\theta_a
	\\
	\times\left(
	1 + {a^2 \over \beta N} \left( 1-{2\over \beta} \right)
	- \sum_{l=1}^a \left( 1 - {2\over \beta} \right) {e^{-i\theta_l} \over N}
	- {1\over 2} \sum_{l=1}^a {e^{-2i\theta_l} \over N } + O\left({1\over N^2}\right)
	\right)
	\\
	\times \prod_{l=1}^a e^{i \theta_l (2/\beta-1)} e^{s e^{i\theta_l}/4 + e^{-i\theta_l}}
	\prod_{1\leq j<k\leq a} | e^{i \theta_k} - e^{i \theta_j}|^{4/\beta}.
	\end{multline}
	In particular, at leading order one reads off (\ref{eq:10}).
	
	As an extension of (\ref{eq:10}), define 
	\begin{multline}\label{3.13}
		\left\langle f(\theta_1,\ldots,\theta_a) \right\rangle
		= {\Gamma(2/\beta)^a \over (2\pi)^a \Gamma(1+a)} 
		\int_{-\pi}^\pi \, d\theta_1 \cdots \int_{-\pi}^\pi \, d\theta_a
		\,
		f(\theta_1,\ldots,\theta_a)
		\prod_{l=1}^a e^{i\theta_l (2/\beta-1)} e^{s e^{i\theta_l}/4 +e^{-i\theta_l}}
		\\
		\times
		\prod_{1\leq j<k \leq a} |e^{i\theta_k} - e^{i\theta_j} |^{4/\beta},
	\end{multline}
	and denote
	\begin{equation}\label{eq:Ca1}
	C_a(s) = {a^2 \over \beta} \left( 1-{2\over \beta} \right) \langle 1\rangle
	- \left( 1 - {2\over \beta} \right) \left\langle \sum_{l=1}^a e^{-i\theta_l} \right\rangle
	- {1\over 2} \left\langle \sum_{l=1}^a e^{-2i\theta_l} \right\rangle
	\end{equation}
	as the terms proportional to $1/N$ appearing in (\ref{eq:c4}).
	
	\begin{proposition}\label{P5}
		We have
		\begin{equation} \label{eq:Ca2}
			C_a(s) = -{1\over 8} a s A_a(s) + {a\over \beta} s A_a'(s)
		\end{equation}
		with $A_a(s) = \,_0 F_1^{(\beta/2)}(2a/\beta;(s/4)^a)$.
	\end{proposition}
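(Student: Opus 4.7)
The plan is to use the integral representation (\ref{3.13}) and derive two identities via integration by parts in the angular variables. First observe that $A_a(s)=\langle 1\rangle$ (this is exactly (\ref{eq:10})), and differentiating (\ref{3.13}) under the integral sign with respect to $s$ gives
\begin{equation*}
A_a'(s)=\tfrac{1}{4}\Big\langle \sum_{l=1}^a e^{i\theta_l}\Big\rangle.
\end{equation*}
So the task reduces to computing $\langle \sum_l e^{-i\theta_l}\rangle$ and $\langle \sum_l e^{-2i\theta_l}\rangle$ in terms of $A_a(s)$ and $A_a'(s)$.

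Writing $W(\theta)$ for the integrand of $\langle\,\cdot\,\rangle$ in (\ref{3.13}) and letting $z_l=e^{i\theta_l}$, a short calculation shows
\begin{equation*}
\frac{1}{iW}\,\partial_{\theta_l}W=\left(\tfrac{2}{\beta}-1\right)+\tfrac{s}{4}z_l-\tfrac{1}{z_l}+\tfrac{2}{\beta}\sum_{k\neq l}\frac{z_l+z_k}{z_l-z_k}.
\end{equation*}
Integrating $\partial_{\theta_l}W$ over $[-\pi,\pi]$ gives zero by periodicity, and summing over $l$ the double-sum term vanishes by antisymmetry in $(l,k)$. This yields the first identity
\begin{equation*}
\Big\langle \sum_l e^{-i\theta_l}\Big\rangle=a\!\left(\tfrac{2}{\beta}-1\right)A_a(s)+sA_a'(s).
\end{equation*}

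For the second identity apply the same idea to $\partial_{\theta_l}(e^{-i\theta_l}W)$, whose integral also vanishes by periodicity. This produces, after using $e^{-i\theta_l}\cdot e^{i\theta_l}=1$ and summing over $l$, the relation
\begin{equation*}
\left(\tfrac{2}{\beta}-2\right)\!\Big\langle\sum_l e^{-i\theta_l}\Big\rangle+\tfrac{sa}{4}A_a(s)-\Big\langle\sum_l e^{-2i\theta_l}\Big\rangle+\tfrac{2}{\beta}\Big\langle\sum_l\sum_{k\neq l}\tfrac{z_l+z_k}{z_l(z_l-z_k)}\Big\rangle=0.
\end{equation*}
The double sum no longer vanishes by antisymmetry, but symmetrizing in $(l,k)$ gives
\begin{equation*}
\sum_l\sum_{k\neq l}\frac{z_l+z_k}{z_l(z_l-z_k)}=\tfrac{1}{2}\sum_{l\neq k}\frac{z_l+z_k}{z_l-z_k}\left(\tfrac{1}{z_l}-\tfrac{1}{z_k}\right)=-(a-1)\sum_l e^{-i\theta_l}.
\end{equation*}
This reduction is the only nontrivial algebraic step of the argument. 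Substituting back then expresses $\langle\sum_l e^{-2i\theta_l}\rangle$ as a linear combination of $A_a(s)$, $A_a'(s)$ and $\langle\sum_l e^{-i\theta_l}\rangle$, and hence by the first identity, purely in terms of $A_a$ and $A_a'$.

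Finally substitute both expressions into (\ref{eq:Ca1}). Writing $u=1-2/\beta$ and collecting like terms, one checks that all the $A_a(s)$ contributions proportional to $u^2$, $a^2u/\beta$ and the $sA_a'(s)$ contributions proportional to $u$ cancel pairwise, leaving precisely $-\tfrac{1}{8}asA_a(s)+\tfrac{a}{\beta}sA_a'(s)$, which is (\ref{eq:Ca2}). The main obstacle is the pairwise cancellation of the Vandermonde contribution (trivial in the first identity, but requiring the symmetrization identity above in the second); the concluding algebra is routine but needs care to track the coefficients in $\beta$ and $a$ correctly.
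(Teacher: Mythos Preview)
Your proof is correct and follows essentially the same route as the paper's: integration by parts in the angular variables to express $\langle\sum_l e^{-i\theta_l}\rangle$ and $\langle\sum_l e^{-2i\theta_l}\rangle$ in terms of $A_a$ and $A_a'$, with the key algebraic step being the symmetrization of the Vandermonde double sum (your identity $\sum_{l\neq k}(z_l+z_k)/(z_l(z_l-z_k))=-(a-1)\sum_l 1/z_l$ is exactly the manipulation the paper performs). The only cosmetic difference is that the paper organises the calculation by writing $\tfrac14 as A_a(s)$ and $sA_a'(s)$ as the left-hand sides of the IBP identities (differentiating only the factor $\prod_l e^{se^{i\theta_l}/4}$), whereas you differentiate the full integrand $W$ and solve for the moments; the resulting linear relations are identical.
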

	\begin{proof}
		Observe that
		\begin{align}\label{eq:pf1}
			{1 \over 4} a s A_a(s) 
			&= { \Gamma(2/\beta)^a \over i (2\pi)^a \Gamma(1+a) } \int_{-\pi}^\pi \, d\theta_1 \cdots \int_{-\pi}^\pi \, d\theta_a
			\,
			\prod_{l=1}^{a} e^{i\theta_l (2/\beta-1)} e^{ e^{-i\theta_l} }
			\left(
			\sum_{j=1}^a e^{-i\theta_j} {\partial \over \partial \theta_j} \prod_{l=1}^{a} e^{s e^{i\theta_l}/4 }  
			\right) \nonumber
			\\
			& \phantom{----------------------------}
			\times \prod_{1\leq j<k \leq a} | e^{i\theta_k} - e^{i\theta_j} |^{4/\beta} \nonumber
			\\
			&= - \left( {2\over \beta} -2 \right) \left\langle \sum_{l=1}^a e^{-i\theta_l} \right\rangle
			+ \left\langle \sum_{l=1}^a e^{-2i\theta_l} \right\rangle
			-{2\over \beta} \left\langle \sum_{ \underset{j\neq k}{j,k=1} }^a
			e^{-i\theta_j } \left( { e^{i\theta_j} \over e^{i\theta_j}-e^{i\theta_k} } - { e^{-i\theta_j} \over e^{-i\theta_j}-e^{-i\theta_k} } \right)
			 \right\rangle,
		\end{align}
		where the second equality is the result of integration by parts. Note that interchanging the indices $\theta_j \leftrightarrow \theta_k$ does not affect the value of the sum in the final term in (\ref{eq:pf1}).
		With some manipulation this shows
		\begin{equation*}
			\left\langle \sum_{ \underset{j\neq k}{j,k=1} }^a
			e^{-i\theta_j } \left( { e^{i\theta_j} \over e^{i\theta_j}-e^{i\theta_k} } - { e^{-i\theta_j} \over e^{-i\theta_j}-e^{-i\theta_k} } \right)
			\right\rangle
			=
			-(a-1) \left\langle
			\sum_{j=1}^a e^{-i\theta_j}
			\right\rangle,
		\end{equation*}
		and thus
		\begin{equation}\label{eq:pf2}
			{1 \over 4} a s A_a(s) =
			- 2 \left( {2\over \beta} -1 -{a\over \beta} \right) \left\langle \sum_{l=1}^a e^{-i\theta_l} \right\rangle
			+ \left\langle \sum_{l=1}^a e^{-2i\theta_l} \right\rangle.
		\end{equation}
		One can similarly find
		\begin{align} \label{eq:pf3}
			s A_a'(s) &= { \Gamma(2/\beta)^a \over i (2\pi)^a \Gamma(1+a) } \int_{-\pi}^\pi \, d\theta_1 \cdots \int_{-\pi}^\pi \, d\theta_a
			\,
			\prod_{l=1}^{a} e^{i\theta_l (2/\beta-1)} e^{ e^{-i\theta_l} }
			\left(
			\sum_{j=1}^a {\partial \over \partial \theta_j} \prod_{l=1}^{a} e^{s e^{i\theta_l} /4}  
			\right) \nonumber
			\\
			&\phantom{----------------------------}
			\times \prod_{1\leq j<k \leq a} | e^{i\theta_k} - e^{i\theta_j} |^{4/\beta} \nonumber
			\\
			&= -a \left( {2\over \beta}-1
			\right)
			\left\langle 1 \right\rangle  + \left\langle \sum_{l=1}^a e^{-i\theta_l} \right\rangle.
		\end{align}
		The result (\ref{eq:Ca2}) then immediately follows from (\ref{eq:Ca1}), (\ref{eq:pf2}) and (\ref{eq:pf3}).
	\end{proof}

	Substituting the result of Proposition \label{P5} in (\ref{eq:c4}) shows that
	for large $N$,
	\begin{equation}
		\,_1 F_1(-N;2a/\beta;(-s/4N)^a) = A_a(s) + {1\over N} \left(
		-{a\over 8}s A_a(s) + {a\over \beta}s A_a'(s)
		\right)
		+
		O\left(
		{1\over N^2}
		\right).
	\end{equation}
	Now substituting this equation into (\ref{eq:6}) gives the extension to (\ref{eq:12}) for general $\beta$.

	\begin{theorem}
		For $a\in \mathbb{Z}_{\geq 0}$, we have
		\begin{equation} \label{eq:th1}
			E_{N,\beta}(0;(0,s/4N); x^a e^{-\beta x/2}) = 
			E_\beta^{\rm hard}(s;a) + {a\over \beta N} s {d\over ds}E_\beta^{\rm hard}(s;a)
			+ O\left( {1\over N^2} \right).
		\end{equation}
		As a consequence, with
		\begin{equation} \label{eq:sb}
		s_{N,\beta} = {s \over 4(N +  a / \beta)  }
		\end{equation}
		we have
		\begin{equation} \label{eq:th1}
			E_{N,\beta}(0,(0,s_{N,\beta}); x^a e^{-\beta x/2}) = 
			E_\beta^{\rm hard}(s;a) 
			+ O\left( {1\over N^2} \right).
		\end{equation}
		
	\end{theorem}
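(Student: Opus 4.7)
The proof is essentially a matter of assembling the ingredients already developed in the section, so my plan is to be explicit about the two reductions that remain.

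First, I would combine the formula (\ref{eq:6}) with the large $N$ expansion (\ref{eq:c4}) and Proposition \ref{P5}. Substituting $s \mapsto s/(4N)$ in (\ref{eq:6}) gives
\begin{equation*}
E_{N,\beta}(0,(0,s/4N);x^a e^{-\beta x/2}) = e^{-\beta s/8}\,{}_1F_1^{(\beta/2)}(-N;2a/\beta;(-s/4N)^a),
\end{equation*}
and (\ref{eq:c4}) identifies the coefficient of $1/N$ in the ${}_1F_1$ as precisely the quantity $C_a(s)$ defined in (\ref{eq:Ca1}). Proposition \ref{P5} evaluates this as $-\tfrac{a}{8} s A_a(s) + \tfrac{a}{\beta} s A_a'(s)$, where $A_a(s) = {}_0F_1^{(\beta/2)}(2a/\beta;(s/4)^a)$.

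Second, I would verify that multiplication by $e^{-\beta s/8}$ converts this expression into $\tfrac{as}{\beta N}\tfrac{d}{ds}E_\beta^{\rm hard}(s;a)$. Using $E_\beta^{\rm hard}(s;a) = e^{-\beta s/8}A_a(s)$ and the product rule,
\begin{equation*}
\frac{a s}{\beta}\frac{d}{ds}E_\beta^{\rm hard}(s;a) = \frac{a s}{\beta}\Bigl(-\tfrac{\beta}{8}e^{-\beta s/8}A_a(s) + e^{-\beta s/8}A_a'(s)\Bigr) = e^{-\beta s/8}\Bigl(-\tfrac{a}{8}sA_a(s) + \tfrac{a}{\beta}sA_a'(s)\Bigr),
\end{equation*}
which matches $e^{-\beta s/8}C_a(s)$. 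This establishes the first equation in the theorem.

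For the second assertion, I would perform a Taylor expansion in the scaling. Writing $s_{N,\beta} = \tfrac{s}{4N}\bigl(1 - \tfrac{a}{\beta N} + O(N^{-2})\bigr)$, I set $\tilde s = s\bigl(1 - \tfrac{a}{\beta N}\bigr) + O(N^{-1})$ and apply the first part of the theorem with $s$ replaced by $\tilde s$:
\begin{equation*}
E_{N,\beta}(0,(0,s_{N,\beta});x^a e^{-\beta x/2}) = E_\beta^{\rm hard}(\tilde s;a) + \frac{a \tilde s}{\beta N}\frac{d}{d\tilde s}E_\beta^{\rm hard}(\tilde s;a) + O(N^{-2}).
\end{equation*}
Expanding $E_\beta^{\rm hard}(\tilde s;a) = E_\beta^{\rm hard}(s;a) - \tfrac{as}{\beta N}\tfrac{d}{ds}E_\beta^{\rm hard}(s;a) + O(N^{-2})$ and noting that the correction term itself contributes $\tfrac{as}{\beta N}\tfrac{d}{ds}E_\beta^{\rm hard}(s;a) + O(N^{-2})$, the two $1/N$ contributions cancel, yielding the claimed $O(N^{-2})$ rate.

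The only genuine obstacle in this program is controlling the error terms in (\ref{eq:c4}) uniformly in $s$ on a compact set: one needs to be sure that the remainder from expanding $(1+1/(Nz_l))^{N+2/\beta -1}$ in (\ref{3.10}) and from the Stirling estimate (\ref{eq:c1}) is genuinely $O(N^{-2})$ after integration against the analytic integrand, which is justified because the integrand is bounded on the unit circle uniformly in $s$ on compacts. Everything else reduces to algebra already carried out above.
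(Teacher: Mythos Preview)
Your proposal is correct and follows essentially the same route as the paper: combine (\ref{eq:6}), the expansion (\ref{eq:c4}), and Proposition \ref{P5}, then substitute back into (\ref{eq:6}). You are simply more explicit than the paper about two points it leaves to the reader, namely the product-rule verification that $e^{-\beta s/8}C_a(s) = \tfrac{as}{\beta}\tfrac{d}{ds}E_\beta^{\rm hard}(s;a)$ and the Taylor-expansion argument for the second assertion; your remark on uniform control of the $O(N^{-2})$ remainder is also a reasonable addition that the paper does not spell out.
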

	
	\subsection{The case $a=1$}\label{S3.3}
	According to (\ref{eq:6}), and the fact contained in the sentence above the latter, we have for $a=1$
	\begin{equation} \label{a1}
	E_{N,\beta}(0;(0,s/4N); x e^{-\beta x/2}) = e^{-\beta s/8} \sum_{p=0}^\infty {(-N)_p \over p! (2/\beta)_p} (-s/4)^p.
	\end{equation}
	Noting that
	$$
	(-N)_p = (-1)^p N^p \Big ( 1 - {p (p-1) \over 2N} + {p \over 24 N^2} (-2 + 9p  - 10p^2 + 3 p^3 ) + 
	O \Big ( {1 \over N^3} \Big ) \Big )
	$$
	we obtain, after minor manipulation
	\begin{equation} \label{a2}
	E_{N,\beta}(0;(0,s/4N); x e^{-\beta x/2}) = V(s) + {1 \over N \beta} s {d \over d s} V(s) + {1 \over N^2} W(s) + O \Big ( {1 \over N^3} \Big ) 
	\end{equation}
	where
	$$
	V(s) = e^{-\beta s/8} {}_0 F_1\Big ( {2 \over \beta}; {s \over 4} \Big ), \quad
	W(s) = {1 \over 24} e^{-\beta s/8} \sum_{p=0}^\infty {p \over p! (2/\beta)_p} (-2 + 9p - 10p^2 + 3 p^3)
	\Big ( {s \over 4} \Big )^p.
	$$
	
	Recalling the definition (\ref{eq:sb}), it follows from (\ref{a2}) that
	\begin{equation} \label{a2a}
	E_{N,\beta}(0;(0,s_{N,\beta}); x e^{-\beta x/2}) = V(s) - {s^2 \over 2 (N \beta)^2} {d^2 \over d s^2} V(s) + {1 \over N^2} W(s) + O \Big ( {1 \over N^3} \Big ). 
	\end{equation}
	Simplifying using computer algebra gives the explicit form of the $1/N^2$ term.
	
	\begin{proposition}
	We have
	\begin{multline} \label{a3}
	E_{N,\beta}(0;(0,s_{N,\beta}); x e^{-\beta x/2}) = e^{-\beta s/8} {}_0 F_1\Big ( {2 \over \beta}; {s \over 4} \Big ) + \\
	{1 \over N^2} {s \over 48} e^{- \beta s /8} \bigg ( (-1+1/\beta) {}_0 F_1\Big ( {2 \over \beta}; {s \over 4} \Big ) +
	((1-1/\beta) +s \beta/ 8) ) {}_0 F_1\Big ( {2 \over \beta} + 1; {s \over 4} \Big ) \bigg ) + O \Big ( {1 \over N^3} \Big ) .
	\end{multline}
	\end{proposition}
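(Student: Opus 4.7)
The starting point is formula (a2a), which identifies
$$R(s) := -\frac{s^2}{2\beta^2}V''(s) + W(s)$$
as the coefficient of $1/N^2$, with $V(s) = e^{-\beta s/8} F(s)$, $F(s) := {}_0F_1(2/\beta; s/4)$, and $W(s)$ given by the explicit series immediately below (a2). The plan is to express $R(s)$ as a $\mathbb{Q}(s,\beta)$-linear combination of $e^{-\beta s/8}F(s)$ and $e^{-\beta s/8}G(s)$, where $G(s) := {}_0F_1(2/\beta+1; s/4)$, and then match against the right-hand side of (a3) coefficient-by-coefficient.

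The first ingredient is a pair of contiguous relations for ${}_0F_1$. Term-by-term differentiation of the series gives $F'(s) = \tfrac{\beta}{8} G(s)$, while the resummation identity $1/(c)_p = (c+p)/[c(c+1)_p]$ with $c := 2/\beta$ applied to the series for $F$ yields $F(s) = G(s) + (s/c) G'(s)$, equivalently $G'(s) = (c/s)(F(s) - G(s))$. Together these reduce every derivative of $V$ to a rational-in-$s$ combination of $F$ and $G$; in particular one obtains
$$V''(s) = e^{-\beta s/8}\Big[ \big(\tfrac{\beta^2}{64} + \tfrac{1}{4s}\big) F(s) - \big(\tfrac{\beta^2}{32} + \tfrac{1}{4s}\big) G(s) \Big].$$

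For $W(s)$ I would first factor $-2+9p-10p^2+3p^3 = (p-1)(p-2)(3p-1)$, noting that the sum effectively begins at $p=3$; substituting $q = p-3$ and using $(c)_{q+3} = c(c+1)(c+2)(c+3)_q$ collapses $W$ into $\tfrac{e^{-\beta s/8}(s/4)^3}{24\, c(c+1)(c+2)}$ times a linear combination of ${}_0F_1(c+3;s/4)$ and ${}_0F_1(c+4;s/4)$. The same resummation trick iterated produces the upward recursion $F_k(s) = F_{k+1}(s) + \tfrac{s}{4(c+k)(c+k+1)} F_{k+2}(s)$ for $F_k := {}_0F_1(c+k;s/4)$, which is solved to express $F_3$ and $F_4$ as rational-in-$s$ combinations of $F_0 = F$ and $F_1 = G$; substituting back yields $W(s)$ in the desired form.

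Adding the resulting expressions for $-\tfrac{s^2}{2\beta^2}V''(s)$ and $W(s)$, collecting coefficients of $F$ and $G$, and comparing with the right-hand side of (a3) completes the proof. The main obstacle is purely bookkeeping: the denominators $c(c+1)(c+2)$ from the prefactor of $W$ must exactly cancel those generated by the recursion, and the various $1/s$ and $1/s^2$ terms must telescope (as they must, since $R(s)$ is manifestly analytic at $s=0$), leaving the two simple coefficients $\tfrac{s}{48}(-1+1/\beta)$ and $\tfrac{s}{48}((1-1/\beta)+s\beta/8)$ quoted in (a3). This step is mechanical but lengthy, which is precisely why the authors invoke computer algebra.
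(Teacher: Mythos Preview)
Your proposal is correct and follows exactly the paper's approach: both start from (\ref{a2a}), identify the $1/N^2$ coefficient as $-\tfrac{s^2}{2\beta^2}V''(s)+W(s)$, and then simplify to the closed form in (\ref{a3}). The only difference is cosmetic---the paper delegates the simplification to computer algebra in a single sentence, whereas you spell out by hand the contiguous relations for ${}_0F_1$ that reduce everything to $F={}_0F_1(2/\beta;s/4)$ and $G={}_0F_1(2/\beta+1;s/4)$; your intermediate expressions (e.g.\ the formula for $V''$ and the factorisation $-2+9p-10p^2+3p^3=(p-1)(p-2)(3p-1)$) check out.
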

	
	\begin{remark}
	In the case $\beta = 2$, $a=1$, taking minus the derivative of the term proportional to $1/N^2$ in (\ref{a3}), one obtains the
	precise functional form of the curve displayed in Figure \ref{fig1}.
	\end{remark}

	\section{Numerics beyond $\beta = 2$, general $a$, and $a=1$, general $\beta$}\label{S4}
	Recursive relations \cite{Fo93,FI10a,FR12}  from the broader theory of the Selberg integral \cite[Ch.~4]{Fo10} allow for the computation of
	averages of the form
	\begin{equation}\label{4.1}
		\left\langle
		\prod_{l=1}^N (x-x_l)^n
		\right\rangle_{L\beta E(x^a e^{-\beta x/2})},
	\end{equation}
	where $\langle \cdot \rangle_{L\beta E(x^a e^{-\beta x/2})}$ denotes the expectation with respect to the PDF (\ref{eq:4}). Specifically, 
	summarising  results presented in \cite{FT19} for the Laguerre even $\beta$ ensemble, introduce fixed parameters $(\lambda_1,\lambda,\alpha)$ and the auxillary function
	\begin{multline}
		L_p[ t^{\lambda_1} e^{-\lambda t}](x) = {p! (N-p)! \over N! } \int_{0}^\infty dt_1 \cdots \int_{0}^\infty dt_N \,
		\prod_{l=1}^N t_l^{\lambda_1} e^{-\lambda t_l} |x-t_l|^{\alpha-1}
		\\
		\times  \prod_{1\leq j<k \leq N} | t_k -t_j|^{2\lambda} e_p(x-t_1,\ldots,x-t_N),
	\end{multline}
	where $e_p(t_1,\ldots,t_N)$ are the elementary symmetric polynomials. Immediately it can be seen that
	\begin{equation}\label{4.2a}
		L_N[w_\beta(t)](x) \bigg\rvert_{\alpha=n,\lambda=\beta/2} = W_{2a/\beta,\beta,N}
		\left\langle
		\prod_{l=1}^N (x-x_l)^n
		\right\rangle_{L\beta E(x^a e^{-\beta x/2})},
	\end{equation}
	where
	\begin{equation}\label{4.2b}
	W_{2a/\beta,\beta,N} = w_{2a/\beta,\beta,N} \prod_{j=1}^{N} { \Gamma(1+2j/\beta) \Gamma(1+a+(j-1)\beta/2) \over \Gamma(1+\beta/2) },
	\quad
	w_{2a/\beta,\beta,N} = (2/\beta)^{N(a+1+\beta(N-1)/2)}
	\end{equation}
	is the normalisation appearing in (\ref{eq:4}). The notation $W_{\alpha, \beta, N}$ is taken from \cite[Eq.~(4.142)]{Fo10} where it is defined as a
	multidimensional integral. The gamma function evaluation given in (\ref{4.2b}) follows from Selberg integral
	theory; see e.g.~\cite[Prop.~4.7.3]{Fo10}.
	
	The key point for present purposes is that the integrals $L_p(x)=L_p[t^{\lambda_1} e^{-\lambda t}](x)$ satisfy the linear differential-difference equations
	\begin{equation}\label{rL}
	\lambda (N-p) L_{p+1}(x) = (\lambda(N-p)x+B_p)L_p(x) + x{d\over dx}L_p(x) - D_p x L_{p-1}(x)
	\end{equation}
	for $p=0,1,\dots, N-1$,
	where
	\begin{align*}
		B_p &= (p-N)(\lambda_1 + \alpha + \lambda(N-p-1))
		\\
		D_p &= p(\lambda(N-p)+\alpha).
	\end{align*}
	We would like to use (\ref{rL}) to obtain a graphical approximation to the $O(1/N^2)$ correction term in the generalisation of (\ref{a3}) to the
	weight $x^a e^{-\beta x/2}$ for $a > 1$, $a \in \mathbb Z^+$. 
	 From the definitions,
	\begin{multline}\label{eq:av}
		E_{N,\beta}(0;(0,s);x^a e^{-\beta x/2}) = {1\over W_{2a/\beta,\beta,N}} \int_s^\infty dx_1 \cdots \int_s^\infty dx_N \prod_{j=1}^N w_\beta (x_j) \prod_{1\leq j<k \leq N} |x_k -x_j|^{\beta}
		\\
		=
		e^{-N\beta s/2} { W_{0,\beta,N} \over W_{2a/\beta,\beta,N} }
		\left\langle
		\prod_{l=1}^N (x_l +s )^a
		\right\rangle_{L\beta E(e^{-\beta x/2})},
	\end{multline}
	where the second equality is the result of the change of variables $x_j \to x_j +s$. Therefore by setting $\lambda_1 =0$, $\lambda = \beta/2$ and using the property
	\begin{equation*}
		L_N(x) = L_0(x)\bigg\rvert_{\alpha \to \alpha +1},
	\end{equation*}
	the expectation in (\ref{eq:av}) can be computed iteratively using the recurrence (\ref{rL}) subject to the initial condition $L_0 \rvert_{\alpha=1} = W_{0,\beta,N}$.
		
	\begin{figure}
		\includegraphics[width=0.7 \textwidth]{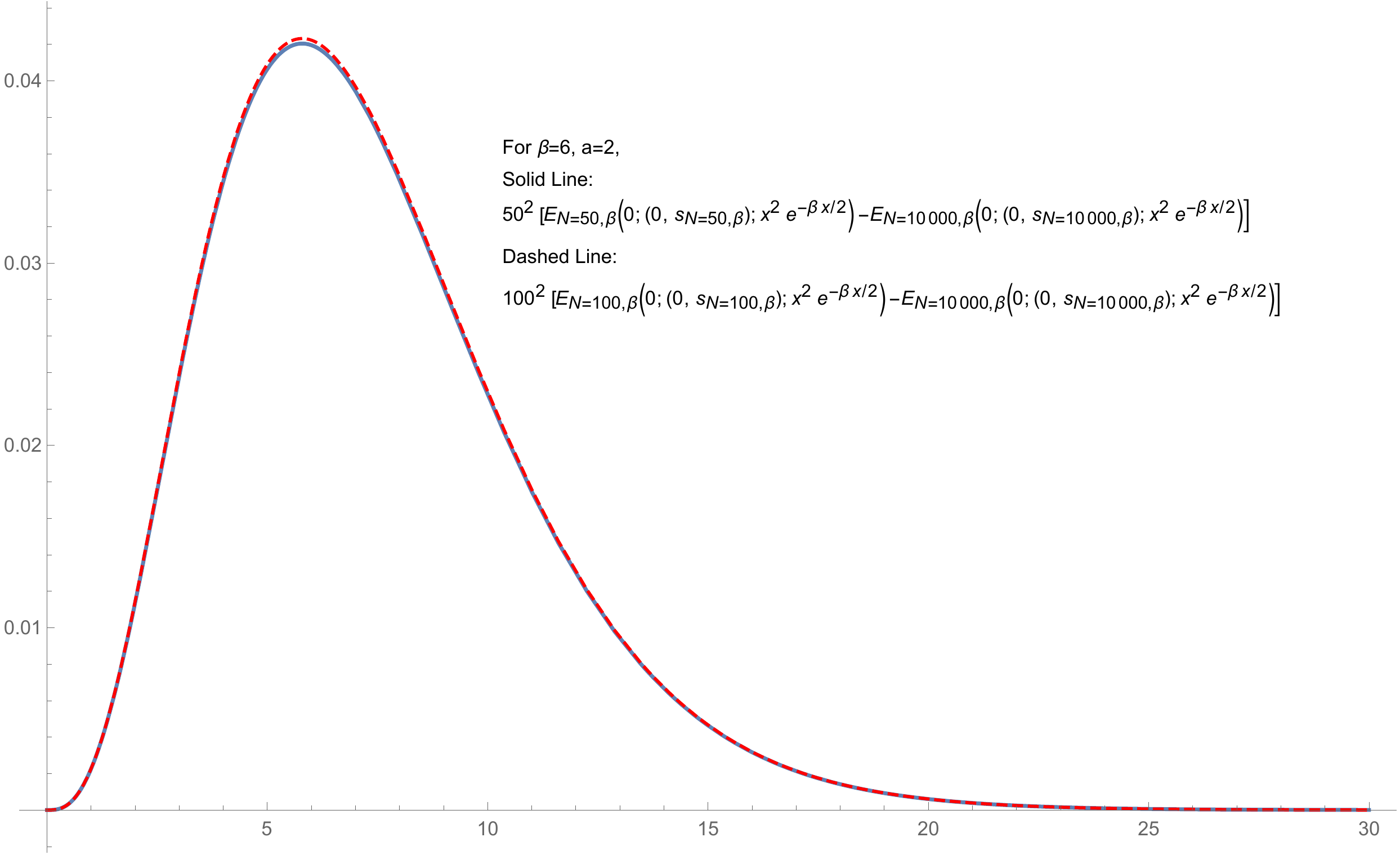}
	\caption{The leading correction term to gap probability given by (\ref{NE}) for $N=50, 100$, $a= 2$, $\beta =6$.}  \label{FigL2}
	\end{figure}

	We would like to use our numerical procedure to graph
	\begin{equation}\label{NE}
	N^2 \Big ( E_{N,\beta}(0,(0,s_{N,\beta}); x^a e^{-\beta x/2}) - E_\beta^{\rm hard}(s;a) \Big ).
	\end{equation}
	According to (\ref{eq:th1}), for large $N$ this should be of order unity and have a well defined limit. We do this by first approximating $E_\beta^{\rm hard}(s;a)$
	by $E_{N_0,\beta}(0,(0,s_{N_),\beta}); x^a e^{-\beta x/2})$, where $N_0$ is large relative to the value of $N$. Then we plot (\ref{NE}) for various
	values of $N$, expecting the final results to be almost identical. As a test, we first did this for $a=1$, $\beta = 6$. Since $a=1$, the exact value of (\ref{NE}) is
	known from (\ref{a3}). It was found that the graph obtained by following the numerical procedure with $N=50$, $N_0 = 10^4$ was almost indistinguishable
	from the theoretical curve. Having then some confidence in the accuracy of our methods, we then set out to graph (\ref{NE}) in the case
	$a = 2$, $\beta = 6$. Now there is no theoretical prediction of the limiting curve; instead different values of $N$ ($N=50$ and 100) were used
	to verify the stability of the procedure. The results are displayed in Figure \ref{FigL2}.

	\section{The spectral density for $\beta$ even}\label{S5}
	In our recent work \cite{FT19}, the fact that for even $\beta$ the spectral density of the Laguerre $\beta$ ensemble can be expressed
	as a generalised hypergeometric function based on Jack polynomials with $\beta$ variables \cite{Fo94b}, was used as the starting point for
	an analysis of the optimal correction to the {\it soft} edge scaling of this quantity. For this same quantity, and from the same
	starting point, the methods of Section \ref{S3}
	will be used here to perform an asymptotic analysis of the hard scaling, up to the order of the optimal correction.
	
	Let the spectral density in the Laguerre $\beta$ ensemble with weight $x^a e^{-\beta x/2}$ be denoted $\rho_{N,\beta} (s;a)$. We know from
\cite{Fo94b} 
\begin{equation} \label{eq:r1}
	\rho_{N,\beta} (s) = N { W_{ 2a/\beta+2, \beta, N-1 } \over W_{2a/\beta, \beta, N} }
	s^a e^{-\beta s/2} \,_1 F_1^{(\beta/2)}(-N+1; 2a/\beta +2; (s)^\beta ),  
\end{equation}
where $\,_1 F_1^{(\beta/2)}$ has the multi-dimensional integral representation
\begin{multline} \label{eq:r2}
	\,_1 F_1^{(\beta/2)}(-N+1; 2a/\beta +2; (s)^\beta ) =
	{1 \over M_\beta (2a/\beta + 2/\beta -1, N-1, 2/\beta)}
	\int_{-1/2}^{1/2} dx_1 \cdots \int_{-1/2}^{1/2} dx_\beta
	\\
	\prod_{l=1}^\beta e^{2\pi i x_l (2a/\beta + 2/\beta -1)}
	(1+e^{-2\pi i x_l})^{N + 2a/\beta + 2/\beta -2} e^{-s e^{2\pi i x_l} }
	\prod_{1 \leq j<k \leq \beta} | e^{2 \pi i x_k} -e^{2\pi i x_j}|^{4/\beta}
\end{multline}
(cf.~(\ref{eq:6})).

In relation to the normalisations, recalling (\ref{eq:8}) and (\ref{4.2b}), we see upon use of Stirling's formula and the multiplication formula for 
the gamma function that
\begin{multline} \label{eq:r3}
	{ W_{ 2a/\beta+2, \beta, N-1 } \over W_{2a/\beta, \beta, N} }
	{1 \over M_\beta (2a/\beta + 2/\beta -1, N-1, 2/\beta)}
	=
	{1\over 2\pi} \left( {4\pi \over \beta } \right)^{\beta/2}
	N^{-2+\beta-a}
	\Gamma(1+\beta/2)
	\\
	\times   \prod_{j=2}^{\beta} {\Gamma(1+2/\beta) \over \Gamma(1+2j/\beta) } \left(
	1 + {1 \over N} \left( a - {a\over \beta} - {a^2 \over \beta } \right) + O\left( {1\over N^2} \right)
	\right).
\end{multline}

Consider next the multi-dimensional integral in (\ref{eq:r2}), with the particular (but not optimal) hard edge scaling
variable corresponding to replacing $s$ by $s/4N$. Upon following the working which lead to (\ref{3.10}), this
exhibits the large $N$ expansion
\begin{multline}\label{eq:r4}
	{ N^{\beta (2a/\beta + 2/\beta -1 )} \over (2\pi)^\beta } \int_{-\pi}^\pi d\theta_1 \cdots \int_{-\pi}^\pi d\theta_\beta
	\left(
	1 + \left( {2a\over \beta} + {2\over \beta} - 2 \right) \sum_{j=1}^\beta {e^{-i\theta_j} \over N }
	- {1\over 2} \sum_{j=1}^\beta {e^{-2i\theta_j} \over N } + O\left( {1\over N^2} \right)
	\right)
	\\
	\times \prod_{l=1}^\beta e^{i \theta_l (2a/\beta + 2/\beta -1)}
	e^{-s e^{i \theta_l}/4 +e^{-i\theta_l} }
	\prod_{1 \leq j<k \leq \beta} | e^{i \theta_k} -e^{i \theta_j}|^{4/\beta}.
\end{multline}

Combining (\ref{eq:r3}) and (\ref{eq:r4}) as required by (\ref{eq:r1}) shows
\begin{equation} \label{eq:r5}
	{1\over 4N} \rho_{N,\beta} \left( {s\over 4N};a \right)
	=
	s^a  \tilde{A}_\beta(s) + {1\over N}  s^a  \tilde{C}_\beta(s)
	+ O \left( {1\over N^2} \right),
\end{equation}
where 
\begin{equation} \label{eq:r6}
	\tilde{A}_\beta (s) = {1\over 4^{1+a}} \left( {\beta \over 2} \right)^{1+2a}
	{ \Gamma(1+\beta/2) \over \Gamma(1+a) \Gamma(1+a+\beta/2) }
	\,_0 F_1^{(\beta/2)} (2a/\beta+2; (-s/4)^\beta)
\end{equation}
with
\begin{multline} \label{eq:r7}
	\,_0 F_1^{(\beta/2)} (2a/\beta+2; (-s/4)^\beta )
	=
	\left( {1\over 2\pi} \right)^\beta
	\prod_{j=1}^\beta { \Gamma(1+2/\beta) \Gamma(2a/\beta + 2j/\beta) \over \Gamma(1+2j/\beta) }
	\int_{-\pi}^{\pi} d\theta_1 \cdots \int_{-\pi}^\pi d\theta_\beta
	\\
	\times \prod_{l=1}^\beta e^{i\theta_l ( 2a/\beta + 2/\beta -1 ) } e^{-s e^{ i\theta_l}/4 + e^{-i\theta_l} }
	\prod_{1 \leq j<k \leq \beta} | e^{i \theta_k} -e^{i \theta_j} |^{4/\beta}
\end{multline}
(cf.~(\ref{eq:10})).

As a variant of (\ref{3.13}) define
\begin{multline} \label{eq:r8}
	\left\langle f(\theta_1, \ldots, \theta_\beta ) \right\rangle_\rho
	=
	c(a,\beta) \int_{-\pi}^\pi d\theta_1 \cdots \int_{-\pi}^\pi d\theta_\beta
	f(\theta_1, \ldots, \theta_\beta )
	\\
	\times \prod_{l=1}^\beta e^{i \theta_l (2a/\beta + 2/\beta -1)}
	e^{-s e^{i \theta_l}/4 +e^{-i\theta_l} }
	\prod_{1 \leq j<k \leq \beta} | e^{i \theta_k} -e^{i \theta_j} |^{4/\beta},
\end{multline}
with $c(a,\beta)$ chosen such that $\tilde{A}_\beta (s)=\left\langle 1 \right\rangle_\rho$. The $O(1/N)$ correction
term in (\ref{eq:r5}) can then be written
\begin{equation} \label{eq:r9}
	\tilde{C}_\beta (s) = 
	\left( a - {a\over \beta} - {a^2 \over \beta} - {\beta s \over 8 } \right) \left\langle 1 \right\rangle_\rho
	+ \left( {2a \over \beta} + {2 \over \beta} - 2 \right) \left\langle \sum_{l=1}^\beta e^{-i\theta_l} \right\rangle_\rho
	- {1\over 2} \left\langle \sum_{l=1}^\beta e^{-2i\theta_l} \right\rangle_\rho.
\end{equation}
In this form, we can proceed as in the derivation of Proposition \ref{P5} to relate this correction term to the leading term
in (\ref{eq:r5}). This in turn allows us to deduce that the optimal hard edge scaling results by using
the variable (\ref{eq:sb}), and as for the distribution of the smallest eigenvalue, the rate of convergence is then $O(1/N^2)$.

\begin{proposition}
	We have
	\begin{equation} \label{eq:r10}
		\tilde{C}_\beta (s) = {a \over \beta} \left( 
		s \tilde{A}_\beta '(s) + (1+a)\tilde{A}_\beta (s)
		\right)
	\end{equation}
	and thus
	\begin{equation} \label{eq:r11}
		{1\over 4N} \rho_{N,\beta} \left( {s\over 4N}; a \right)
		= 
		s^a  \tilde{A}_\beta(s) + {1\over N}  {a \over \beta} 
		{d\over ds} \left( s^{1+a}  \tilde{A}_\beta(s)  \right)
		+ O \left( {1\over N^2} \right).
	\end{equation}
	As a consequence, with $s_{N,\beta}$ given by (\ref{eq:sb}), and denoting $\rho_{\beta}^{\rm hard}(s;a) := s^a  \tilde{A}_\beta(s)$, for
	large $N$ we have
	\begin{equation} \label{eq:r11a}
	{ \partial s_{N,\beta} \over \partial s } \rho_{N,\beta} (s_{N,\beta};a)
= \rho_{\beta}^{\rm hard}(s;a) 
+ O \left( {1\over N^2} \right).
	\end{equation}
\end{proposition}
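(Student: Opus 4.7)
The strategy mirrors the derivation of Proposition \ref{P5}. My plan is to take the expression (\ref{eq:r9}) for $\tilde{C}_\beta(s)$ and, via integration by parts in the multi--dimensional integral (\ref{eq:r7}) (equivalently, inside the average $\langle\cdot\rangle_\rho$ defined in (\ref{eq:r8})), express each of the two averages $\langle \sum_l e^{-i\theta_l}\rangle_\rho$ and $\langle \sum_l e^{-2i\theta_l}\rangle_\rho$ as a linear combination of $\tilde{A}_\beta(s)=\langle 1\rangle_\rho$ and $s\tilde{A}_\beta'(s)$. Once these two identities are in hand, substitution into (\ref{eq:r9}) should collapse, after cancellation of the constant term $(a-a/\beta-a^2/\beta-\beta s/8)\langle 1\rangle_\rho$, to the claimed formula (\ref{eq:r10}).

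Concretely, I first compute $s\tilde{A}_\beta'(s)$ by differentiating (\ref{eq:r7}) under the integral sign; since $s$ appears only in $\prod_l e^{-se^{i\theta_l}/4}$, one factor of $-\tfrac{1}{4}\sum_l e^{i\theta_l}$ drops down, and this is recognised as $-\tfrac{1}{4i}\sum_j \partial_{\theta_j}$ acting on that same product, enabling an integration by parts exactly as in (\ref{eq:pf3}). A second such integration by parts, with generator $-\tfrac{1}{4i}\sum_j e^{-i\theta_j}\partial_{\theta_j}$, produces the analogue of (\ref{eq:pf2}) and yields a companion identity relating $s\tilde{A}_\beta(s)$ to the two averages of interest. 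In both IBPs the derivatives of $\prod_{j<k}|e^{i\theta_k}-e^{i\theta_j}|^{4/\beta}$ generate a double sum which, upon the standard symmetrisation $\theta_j\leftrightarrow\theta_k$, collapses to a single sum as in the step just below (\ref{eq:pf1}). The only change from the Proposition \ref{P5} computation is that the prefactor exponent is now $2a/\beta+2/\beta-1$ in place of $2/\beta-1$, and the $s$-dependence in the exponential carries the opposite sign; these alterations affect only the numerical coefficients. The main obstacle will be bookkeeping: tracking how the constant term, the $\beta s/8$ term and the two averages in (\ref{eq:r9}) recombine precisely into $(a/\beta)\bigl(s\tilde{A}_\beta'(s)+(1+a)\tilde{A}_\beta(s)\bigr)$.

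Given (\ref{eq:r10}), the expansion (\ref{eq:r11}) follows by a single algebraic rearrangement: multiplying both sides by $s^a$ and using $\rho_\beta^{\rm hard}(s;a)=s^a\tilde{A}_\beta(s)$ gives
\[
s^a\tilde{C}_\beta(s)=\tfrac{a}{\beta}\bigl(s^{1+a}\tilde{A}_\beta'(s)+(1+a)s^a\tilde{A}_\beta(s)\bigr)=\tfrac{a}{\beta}\tfrac{d}{ds}\bigl(s^{1+a}\tilde{A}_\beta(s)\bigr),
\]
so the $O(1/N)$ coefficient in (\ref{eq:r5}) is exactly $(a/\beta)\,\tfrac{d}{ds}\bigl(s\,\rho_\beta^{\rm hard}(s;a)\bigr)$.

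Finally, to derive (\ref{eq:r11a}) from (\ref{eq:r11}) I would write
\[
\tfrac{\partial s_{N,\beta}}{\partial s}\rho_{N,\beta}(s_{N,\beta};a)=\tfrac{N}{N+a/\beta}\cdot \tfrac{1}{4N}\rho_{N,\beta}\bigl(u/(4N);a\bigr),\quad u=\tfrac{sN}{N+a/\beta}=s\bigl(1-\tfrac{a/\beta}{N}+O(1/N^2)\bigr),
\]
apply (\ref{eq:r11}) at argument $u$, and Taylor expand both the prefactor $N/(N+a/\beta)=1-(a/\beta)/N+O(1/N^2)$ and $\rho_\beta^{\rm hard}(u;a)$ in powers of $1/N$. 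The shift $u\mapsto s$ contributes $-(a/\beta N)\,s\,\tfrac{d}{ds}\rho_\beta^{\rm hard}(s;a)$, and combined with the $-(a/\beta N)\rho_\beta^{\rm hard}(s;a)$ coming from the prefactor this gives $-(a/\beta N)\tfrac{d}{ds}\bigl(s\,\rho_\beta^{\rm hard}(s;a)\bigr)$, which precisely cancels the explicit $+(a/\beta N)\tfrac{d}{ds}\bigl(s\,\rho_\beta^{\rm hard}(s;a)\bigr)$ supplied by (\ref{eq:r11}). The remaining error is $O(1/N^2)$, establishing (\ref{eq:r11a}).
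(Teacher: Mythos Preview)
Your proposal is correct and follows essentially the same approach as the paper: the paper's proof simply records the two integration-by-parts identities
\[
-\tfrac{1}{4}\beta s\,\tilde{A}_\beta(s)=\bigl(4-\tfrac{2a}{\beta}-\tfrac{4}{\beta}\bigr)\Bigl\langle\sum_l e^{-i\theta_l}\Bigr\rangle_\rho+\Bigl\langle\sum_l e^{-2i\theta_l}\Bigr\rangle_\rho,\qquad
s\,\tilde{A}_\beta'(s)=-\beta\bigl(\tfrac{2a}{\beta}+\tfrac{2}{\beta}-1\bigr)\langle 1\rangle_\rho+\Bigl\langle\sum_l e^{-i\theta_l}\Bigr\rangle_\rho,
\]
substitutes them into (\ref{eq:r9}) to obtain (\ref{eq:r10}), and then notes that (\ref{eq:r11}) and (\ref{eq:r11a}) follow from (\ref{eq:r5}) and a Taylor expansion---exactly the steps you outline, with your account supplying the details the paper leaves implicit.
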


\begin{proof}
	Substituting 
	\begin{align*}
		-{1\over 4} \beta s \tilde{A}_\beta (s) 
		&= 
		\left( 4 - {2a \over \beta} - {4\over \beta} \right) \left\langle \sum_{l=1}^\beta e^{-i\theta_l} \right\rangle_\rho
		+
		\left\langle \sum_{l=1}^\beta e^{-2i\theta_l} \right\rangle_\rho ,
		\\
		s \tilde{A}_\beta '(s) 
		&= 
		-\beta \left( {2a \over \beta} + {2 \over \beta} - 1 \right) \left\langle 1 \right\rangle_\rho
		+ \left\langle \sum_{l=1}^\beta e^{-i\theta_l} \right\rangle_\rho
	\end{align*}
	into (\ref{eq:r9}) gives (\ref{eq:r10}) and (\ref{eq:r11}) follows from (\ref{eq:r10}), (\ref{eq:r5}).
	The result (\ref{eq:r11a}) follows by applying an appropriate Taylor expansion to (\ref{eq:r11}).
\end{proof}

The numerical evaluation of $\rho_{N,\beta} (s)$, in the case $\beta$ even, is straightforward for $\beta = 2$ and 4
due to there being evaluations involving only Laguerre polynomials (see e.g.~\cite{Fo10}). 
Moreover, the optimal leading correction to limiting hard edge scaled density is in the case $\beta = 2$ given by
(\ref{eq:rho2}).
Beyond these cases
the recurrence of Section \ref{S4} can be used to provide numerical evaluations, at least for moderate values of $\beta N$, as the essential task is to
evaluate (\ref{4.1}) with $N$ replaced by $N-1$ and $n = \beta$. The natural analogue of (\ref{NE}) is
\begin{equation}\label{NE1}
N^2 \bigg ( {\partial s_{N,\beta} \over \partial s}  \rho_{N,\beta} (s_{N,\beta};a) - {\partial s_{N_0,\beta} \over \partial s}  \rho_{N_0,\beta} (s_{N_0,\beta};a) \bigg ),
\end{equation}
where $N_0$ is chosen much larger than $N$. According to (\ref{eq:r11a}), for large $N$ this should have a well defined limit.
While we find that such plots are stable upon varying $N$, comparison in the case $\beta = 2$ when we have available the
exact result (\ref{eq:rho2}) indicates that these graphs do not accurately predict the limit as $s$ increases.
In this regard, note that \ref{eq:rho2}), in contrast to the analogous exact expression (\ref{a3}) for the $1/N^2$
correction in the case of the gap probability, is itself an increasing function of $s$.

	\section{Concluding remarks}
	The focus of attention in this work has been the optimal rate of convergence to the limit at the hard edge for some specific random matrix ensembles.
	The latter are the class of Laguerre $\beta$ ensembles, specified by (\ref{eq:4}) with the weight $w_\beta(x) = x^a e^{-\beta x/2} \chi_{x > 0}$.
	Using the theory of multidimensional hypergeometric functions based on Jack polynomials, it has been possible to analyse the probability that the
	interval $(0,s)$ is free of eigenvalues
	($a \in \mathbb Z^+$, general $\beta > 0$), and the eigenvalue density (even $\beta$, general $a > -1$) for large $N$. By use of the scaled variable (\ref{eq:13a})
	an optimal convergence to the limit was found at rate $O(N^{-2})$.
	
	A curious feature of (\ref{eq:13a}) is that, with $a$ replaced by $a \beta/2$ to conform to the convention used for the Laguerre
	$\beta$-ensemble in \cite{FT18,FT19} and thus with $4(N + a/\beta) \mapsto 4N + 2a$, it is this same quantity which appears in the
	optimal {\it soft} edge scaling. However it is now as an additive quantity: 
	\begin{equation}\label{e1}
	s \mapsto 4N + 2a + 2(2N)^{1/3} s. 
	\end{equation}
	
	This observation is
	suggestive from another viewpoint. As noted in the Introduction, in the case $\beta = 2$ and with $a = n - N$ the Laguerre
	$\beta$ ensemble is realised as the PDF for the squared singular values of an $n \times N$ $(n \ge N)$ standard complex Gaussian
	random matrix. Edelman, 
Guionnet and P\'ech\'e \cite{EAP16} have analysed the hard edge scaling limit of a generalisation of this model, in which the complex Gaussians are
modified by the addition of another probability with measure $d \mu(z)$. The latter is required to have mean zero, complex variance 1, and
with finite fourth moment (at least). Define the kurtosis by $\gamma = (\mu_4/ \sigma_R^4) - 3$, where $\mu_4 = \int |z|^4 \, d \mu(z)$ and
$\sigma_R^2 = \int x^2 \,  d \mu(z)$. It was proved in \cite{EAP16} that (\ref{eq:12}) holds true upon the replacement $a \mapsto a + \gamma$. It is
therefore tempting to speculate that for this same model, but now analysed at the the soft edge, an optimal convergence to the limit would again occur upon making the
replacement $a \mapsto a + \gamma$, but now in (\ref{e1}).

Similar speculations apply to the squared singular values of $\Sigma^{1/2} X$, where $X$ is an $n \times N$ $(n \ge N)$ standard complex Gaussian
	random matrix and $\Sigma$ is a fixed positive definite matrix. As remarked below (\ref{eq:12}), upon certain assumptions relating to the
	eigenvalues of $\Sigma$,
	Hachen, Hardy and Najim \cite{HHN16} proved that the effect on the $1/N$ expansion is a multiplicative renormalisation of $a=n-N$. 
	The question suggested is thus if this same multiplicative renormalisation of the parameter $a$ in (\ref{e1}) provides an
	optimal convergence rate at the soft edge.

 
\section*{Acknowledgements}
This work is part of a research program supported by the Australian Research Council (ARC) through the ARC Centre of Excellence for Mathematical and Statistical frontiers (ACEMS). PJF also acknowledges partial support from ARC grant DP170102028, and AKT acknowledges the support of a Melbourne postgraduate award.	Correspondence from Jamal Najim relating to \cite{HHN16} is much appreciated.
	
	\providecommand{\bysame}{\leavevmode\hbox to3em{\hrulefill}\thinspace}
	\providecommand{\MR}{\relax\ifhmode\unskip\space\fi MR }
	\providecommand{\MRhref}[2]{%
		\href{http://www.ams.org/mathscinet-getitem?mr=#1}{#2}
	}
	\providecommand{\href}[2]{#2}

\end{document}